\documentclass[10pt, twocolumn]{IEEEtran}

%
\ifCLASSINFOpdf
  \usepackage[pdftex]{graphicx}
\else
\fi

\usepackage{amsfonts,amsmath,amsthm,amssymb,graphicx,algorithm,epstopdf,cite,url,subfigure}
\usepackage{psfrag,amssymb,amsmath,pifont,cite,graphics,graphicx,epsfig,subfigure,amscd,helvet,multirow,enumerate}
\usepackage{lmodern}
\usepackage{color}
\usepackage{epstopdf}
\usepackage{accents}
\usepackage{amssymb,amsmath,amscd,cite,enumerate}
\usepackage{graphics,graphicx,epsfig,psfrag,subfigure,threeparttable,url}
\usepackage[colorlinks,citecolor=black,linkcolor=black]{hyperref}
\usepackage{algpseudocode}
\usepackage[english]{babel}
\newtheorem{theorem}{Theorem}
\newtheorem{lemma}{Lemma}

\newtheorem{proposition}{Proposition}
\newtheorem{corollary}{Corollary}
\newtheorem{remark}{Remark}
\newtheorem{example}{Example}


\hyphenation{op-tical net-works semi-conduc-tor}

\begin{document}
%
\title{Nearly Optimal Bounds for Orthogonal Least Squares}


\author{
Jinming~Wen$^\dag$, Jian Wang$^\dag$ and Qinyu Zhang~\IEEEmembership{Senior Member,~IEEE}
\thanks{J.~Wen is with  the Department of Electrical and Computer Engineering, University of Alberta, Edmonton T6G 2V4, Canada (e-mail: jinming1@ualberta.ca). 

J.~Wang is with the School of Data Science, Fudan University, Shanghai 200433, China. He was with the the Institute of New Media and Communications, Seoul National University, Seoul 151-742, Korea (Corresponding author, e-mail: wangjianeee@gmail.com).

Q.~Zhang is with the School of Electronic and Information Engineering, Harbin Institute of Technology, Shenzhen 518055, China (e-mail: zqy@hit.edu.cn). 
}

\thanks{$^\dag$These authors contributed equally to this work. }
}

\markboth{}
{Shell \MakeLowercase{\textit{et al.}}: Bare Demo of IEEEtran.cls for Journals}
%




\IEEEtitleabstractindextext{%
\begin{abstract}
In this paper, we study the orthogonal least squares (OLS) algorithm for sparse recovery.
On the one hand, we show  that if the sampling matrix $\mathbf{A}$ satisfies
the restricted isometry property (RIP) of order $K + 1$ with isometry constant
$$
\delta_{K + 1} < \frac{1}{\sqrt{K+1}},
$$
then OLS exactly recovers the support of any $K$-sparse vector $\mathbf{x}$
from its samples $\mathbf{y} = \mathbf{A} \mathbf{x}$ in $K$ iterations.
On the other hand, we show that OLS may not be able to recover the support of a $K$-sparse vector $\mathbf{x}$ in $K$ iterations for some $K$ if
$$
\delta_{K + 1} \geq \frac{1}{\sqrt{K+\frac{1}{4}}}.
$$
\end{abstract}
\begin{IEEEkeywords}
Sparse recovery, orthogonal least squares (OLS), orthogonal matching pursuit (OMP),
restricted isometry property (RIP).
\end{IEEEkeywords}

}
\maketitle
 

\IEEEdisplaynontitleabstractindextext

%
\IEEEpeerreviewmaketitle

\section{Introduction}
\label{sec:intro}
\IEEEPARstart{O}{r}thogonal least squares (OLS) is a classical greedy algorithm for subset selection in sparse approximation
and has attracted much attention in sparse recovery~\cite{chen1989orthogonal,rebollo2002optimized,foucart2013stability,herzet2012exact,wang2014recovery,maung2015improved}.
Consider the linear sampling model
\begin{equation}
\label{e:model}
\mathbf{y} = \mathbf{A} \mathbf{x},
\end{equation}
where $\mathbf{x} \in \mathbb{R}^n$ is a $K$-sparse vector (it has at most $K$ nonzero entries)
and $\mathbf{A} \in \mathbb{R}^{m \times n}$ is a sampling matrix.
The goal of sparse recovery is to identify the support of $\mathbf{x}$ (i.e., {\color{black}{the set of the positions of its nonzero elements}}) from the samples $\mathbf{y}$.
The OLS algorithm performs in an iterative {\color{black}{manner}}. In each iteration, it adds to {\color{black}{the estimated support}} an index which
leads to the maximum reduction {\color{black}{of}} the residual power.
The vestige of the active list is then eliminated from $\mathbf{y}$, yielding a residual update for the next iteration. See Table~\ref{a:OLS} for a mathematical description of OLS.
It has been shown that under appropriate conditions on $\mathbf{A}$, OLS yields exact recovery of the support of $\mathbf{x}$~\cite{foucart2013stability,herzet2012exact,wang2014recovery}.

In the sparse  approximation and sparse recovery literature, one of the typical methods that are closely related to OLS is the orthogonal matching pursuit (OMP) algorithm~\cite{pati1993orthogonal}.
The main difference between OLS and OMP lies in their greedy rules of updating the {\color{black}{estimated}} support in  each iteration.
While OLS seeks a candidate which results in the most significant decrease in the residual power,
OMP chooses a column that is most strongly correlated with the signal residual.
Consequently, the OLS and OMP algorithms coincide for the first iteration but usually differ afterward (see Section~\ref{sec:observation} for the justification). It has been empirically observed that OLS is computationally {\color{black}{more expensive yet is more reliable than OMP~\cite{herzet2012exact}.}} For more details on the differences between these two algorithms, see~\cite{blumensath2007difference} and {{\color{black}{the references therein}}.


In analyzing sparse recovery algorithms, {\color{black}{the restricted isometry property
(RIP) has been widely employed (see, e.g., \cite{candes2005decoding,foucart2009sparsest,needell2009cosamp,dai2009subspace,davenport2010analysis,liu2012orthogonal})}}.
A matrix $\mathbf{A}$ is said to satisfy the RIP of
order $K$ if there exists a constant $\delta \in (0, 1)$ such that~{\cite{candes2005decoding}}
\begin{equation}
  \label{eq:RIP} (1 - \delta) \| \mathbf{x} \|_2^2 \leq \| \mathbf{A x}
  \|_2^2 \leq (1 + \delta) \| \mathbf{x} \|_2^2
\end{equation}
for all $K$-sparse vectors $\mathbf{x}$. Specifically, the minimum of all
constants $\delta$'s satisfying~(\ref{eq:RIP}) is called the isometry constant and denoted
{\color{black}{by}} $\delta_K(\mathbf{A})$.
In the sequel, if there is no risk of confusion, we use $\delta_K$ instead of $\delta_K(\mathbf{A})$ for brevity.
In this paper, we utilize the RIP to study the recovery performance of OLS.
Our main goal is to develop a condition guaranteeing exact recovery of the support of $\mathbf{x}$ with the OLS algorithm.
In particular, our result is formally described in the following theorem.

\begin{theorem}
  \label{thm:1} 
Let $\mathbf{A} \in \mathbb{R}^{m \times n}$ be a sampling matrix with unit $\ell_2$-norm columns and satisfy the RIP with 
  \begin{equation}
    \label{eq:o}
     \delta_{K + 1} < \frac{1}{\sqrt{K + 1}}.
  \end{equation}
Then OLS exactly recovers the supports of all  $K$-sparse vectors $\mathbf{x} \in \mathbb{R}^n$
from the samples $\mathbf{y} = \mathbf{A} \mathbf{x}$ in $K$ iterations.
\end{theorem}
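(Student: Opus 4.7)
The natural approach is induction on the iteration index: assuming that after $k$ iterations ($0\leq k<K$) the estimated support $S^k$ is a subset of the true support $T$ with $|S^k|=k$, I would show that the $(k+1)$-th selected index $j^{k+1}$ also belongs to $T$. The base case $k=0$ is immediate since $S^0=\emptyset$, in which case $\mathbf{P}_{S^0}^\perp=\mathbf{I}$ and OLS coincides with OMP. At iteration $k+1$, using $\mathbf{r}^k\perp\mathrm{range}(\mathbf{A}_{S^k})$, the OLS selection rule can be rewritten as
$$j^{k+1}=\arg\max_{j\notin S^k}\frac{|\langle\mathbf{r}^k,\mathbf{a}_j\rangle|}{\|\mathbf{P}_{S^k}^\perp\mathbf{a}_j\|_2}.$$

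The crux is thus to compare
$$L:=\max_{j\in T\setminus S^k}\frac{|\langle\mathbf{r}^k,\mathbf{a}_j\rangle|}{\|\mathbf{P}_{S^k}^\perp\mathbf{a}_j\|_2}\quad\text{and}\quad U:=\max_{j\notin T}\frac{|\langle\mathbf{r}^k,\mathbf{a}_j\rangle|}{\|\mathbf{P}_{S^k}^\perp\mathbf{a}_j\|_2},$$
and establish $L>U$. For the lower bound on $L$, I would expand the residual as $\mathbf{r}^k=\sum_{i\in T\setminus S^k}x_i\,\mathbf{P}_{S^k}^\perp\mathbf{a}_i$, so that $\|\mathbf{r}^k\|_2^2=\sum_{i\in T\setminus S^k}x_i\langle\mathbf{r}^k,\mathbf{a}_i\rangle$; an $\ell_1$--$\ell_\infty$ bound then yields $\max_{i\in T\setminus S^k}|\langle\mathbf{r}^k,\mathbf{a}_i\rangle|\geq\|\mathbf{r}^k\|_2^2/(\sqrt{K-k}\,\|\mathbf{x}_{T\setminus S^k}\|_2)$, while the unit column norm gives $\|\mathbf{P}_{S^k}^\perp\mathbf{a}_i\|_2\leq 1$. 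For the upper bound on $U$, I would bound the denominator below by $\sqrt{1-\delta_{k+1}}$ using RIP applied to the $(k+1)$-sparse coefficient vector $[-\mathbf{c}^\star;1]$, where $\mathbf{c}^\star$ is the least-squares coefficient of $\mathbf{a}_j$ in $\mathbf{A}_{S^k}$, and I would bound the numerator through an RIP-consequence inequality of the form $|\langle\mathbf{A}_I\mathbf{u},\mathbf{A}_J\mathbf{v}\rangle|\leq\delta_{|I|+|J|}\|\mathbf{u}\|_2\|\mathbf{v}\|_2$ applied to the disjoint sets $T\setminus S^k$ and $\{j\}$, after accounting for the projector $\mathbf{P}_{S^k}^\perp$.

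The main obstacle will be combining these estimates tightly enough to achieve the sharp threshold $1/\sqrt{K+1}$ rather than a weaker constant such as $c/\sqrt{K}$. This will require carefully exploiting the RIP relationship between $\|\mathbf{r}^k\|_2$ and $\|\mathbf{x}_{T\setminus S^k}\|_2$ and controlling the interaction of $\mathbf{P}_{S^k}^\perp$ with the RIP bounds: the ``projected'' inner products $\langle\mathbf{P}_{S^k}^\perp\mathbf{a}_i,\mathbf{P}_{S^k}^\perp\mathbf{a}_j\rangle$ are Schur-complement-like quantities whose sharp control seems to demand a block form of the RIP. A secondary subtlety will be to propagate the RIP constants across iterations so that the cumulative factor does not degrade; since every set considered has size at most $K+1$, one expects to work throughout with $\delta_{K+1}$ rather than iteration-dependent constants, and a counting argument on the sizes of $S^k\cup T\cup\{j\}$ should suffice to justify this reduction.
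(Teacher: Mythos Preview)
Your inductive skeleton and your lower bound for $L$ match the paper exactly, but the way you propose to bound $U$ has a genuine gap that prevents you from reaching the constant $1/\sqrt{K+1}$. Two ingredients are missing.

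First, your denominator bound $\|\mathbf{P}_{S^k}^\perp\mathbf{a}_j\|_2\geq\sqrt{1-\delta_{k+1}}$ is precisely the weaker estimate the paper singles out as insufficient. The paper proves and uses the sharper bound $\|\mathbf{P}_{S^k}^\perp\mathbf{a}_j\|_2\geq\sqrt{1-\delta_{k+1}^2}$ (Lemma~\ref{l:projld}), obtained by interpreting $\mathbf{P}_{S^k}\mathbf{a}_j$ as $\mathbf{A}\mathbf{z}$ for some $\mathbf{z}$ supported on $S^k$ and applying the Chang--Wu angle inequality $\cos\theta\leq\delta_{|S^k|+1}$. The difference between $\sqrt{1-\delta}$ and $\sqrt{1-\delta^2}$ is exactly what makes the final arithmetic close: under $\delta_{K+1}<1/\sqrt{K+1}$ one gets $(K-1)/(1-\delta_{K+1}^2)\leq K$, whereas $(K-1)/(1-\delta_{K+1})$ can exceed $K$.

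Second, and more fundamentally, bounding the numerator of $U$ by the crude cross-correlation inequality $|\langle\mathbf{A}_I\mathbf{u},\mathbf{A}_J\mathbf{v}\rangle|\leq\delta_{|I|+|J|}\|\mathbf{u}\|_2\|\mathbf{v}\|_2$ and then separately relating $\|\mathbf{r}^k\|_2$ to $\|\mathbf{x}_{T\setminus S^k}\|_2$ via RIP loses too much; this is essentially the route of the earlier paper that only achieved $\delta_{K+1}<1/(\sqrt{K}+2)$. The paper instead proves a single inequality (Lemma~\ref{l:mainlemma}) of the form
\[
\|\mathbf{r}^k\|_2^2>\sqrt{\alpha}\,\|\mathbf{x}_{T\setminus S^k}\|_2\,|\langle\mathbf{a}_j,\mathbf{r}^k\rangle|\quad\text{whenever}\quad\delta_{K-k+1}<\frac{1}{\sqrt{\alpha+1}},
\]
valid for \emph{any} real $\alpha>0$, via the Mo construction: one builds auxiliary vectors $\mathbf{u},\mathbf{v}$ and evaluates $\|\mathbf{\Phi}(\mathbf{u}+\mathbf{v})\|_2^2-\|\mathbf{\Phi}(\beta^2\mathbf{u}-\mathbf{v})\|_2^2$ for a suitably scaled projected matrix $\mathbf{\Phi}$. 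The crucial move is then to choose $\alpha=|T\setminus S^k|/\|\mathbf{P}_{S^k}^\perp\mathbf{a}_{j_0}\|_2^2$, which folds the OLS normalization factor directly into the RIP hypothesis; combined with the sharp $\sqrt{1-\delta^2}$ bound, this yields $\alpha\leq K$ uniformly in $k$, and the condition $\delta_{K+1}<1/\sqrt{K+1}$ suffices. Your proposal has neither this lemma nor the idea of absorbing the denominator into the parameter, and without them the argument you sketch will stall at a weaker constant.
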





\setlength{\arrayrulewidth}{1.6pt}
\begin{table}
  \centering
\caption{The OLS Algorithm} \label{a:OLS}
\vspace{-2mm}
\begin{tabular}{@{}ll}
\hline \\ \vspace{-12pt} \\
\textbf{~~~Input}       &$\mathbf{A}$, $\mathbf{y}$, and sparsity level $K$.~ \\
\textbf{~~~Initialize}  & iteration counter $k = 0$, \\
                     & estimated support ${T}^{0} = \emptyset$, \\
                     & and residual vector $\mathbf{r}^{0} = \mathbf{y}$.
                     \\
\textbf{~~~While}       & $k < K$, \textbf{do}\\
                     & $k = k + 1$; \\
                     & Identify{\color{black}$^a$} \hspace{1.2mm}$t^{k} = \underset{i \in \{1, \cdots, n\}}{\arg \min}  \| \mathbf{P}^{\bot}_{{T}^{k - 1} \cup \{i\}} \mathbf{y} \|_{2}^{2}$; \\
                     & Augment \hspace{0mm}${T}^{k} = {T}^{k - 1} \cup \{ t^{k} \}$; \\
                     & Estimate \hspace{.6mm}$\mathbf{x}^{k} = \underset{\mathbf{u}:\text{supp}(\mathbf{u}) = {T}^k}{\arg \min} \|\mathbf{y} - \mathbf{A} \mathbf{u}\|_2$; \\
                     & Update \hspace{3mm}$\mathbf{r}^{k} = \mathbf{y} - \mathbf{A} \mathbf{x}^{k}$. \\
\textbf{~~~End}         \\
\textbf{~~~Output}  &$T^k$ and $\mathbf{x}^k$. \\
\vspace{-3.5pt} \\
\hline
\end{tabular}
    \begin{tablenotes}
        \item[a] \hspace{-4.5mm}{\color{black}$^a$}If the minimum occurs for multiple indices, {\color{black}{break the tie deterministically in favor of the first one.}}
    \end{tablenotes} \vspace{-3mm}
\end{table}
\setlength{\arrayrulewidth}{1.6pt}

Theorem~\ref{thm:1} improves~\cite[Theorem 1]{wang2014recovery} which shows that OLS performs the exact support recovery under
\label{footnote:aaa}}
\begin{equation}
\delta_{K + 1} < \frac{1}{\sqrt{K} + 2}. \label{eq:wangli}
\end{equation}
One can interpret from \eqref{eq:o} and \eqref{eq:wangli} that exact recovery with OLS can be ensured when the isometry constant $\delta_{K + 1}$ is inversely proportional to $\sqrt{K}$. In fact,  by exploring similarities between OLS and OMP, it can further be shown that the scaling law for $\delta_{K + 1}$ is necessary as well. Specifically, there exist counterexamples of $\mathbf{A}$ with unit $\ell_2$-norm columns and isometry constant~\cite{mo2012remarks,wang2012Recovery}
\begin{equation} \label{eq:nec}
\delta_{K + 1} = \frac{1}{\sqrt{K}},
\end{equation}
for which OMP fails to identify a support index of some $K$-sparse signals in the first iteration. Since OLS coincides with OMP for the first iteration, these counterexamples naturally apply to OLS, which implies that $\delta_{K + 1} < {1}/{\sqrt{K}}$ is also a necessary condition for the OLS algorithm.\footnote{Condition $\delta_{K + 1} < {1}/{\sqrt{K}}$ being necessary for OLS has also been shown in~\cite{herzet2012exact} for a sampling matrix $\mathbf{A}$ with non-unit $\ell_2$-norm columns.}
 The following result gives an improvement over this condition.


%


\begin{theorem}\label{thm:nece} 
  There exist a vector $\mathbf{x}  \in \mathbb{R}^n$ with some sparsity $K$
  and a sampling matrix $\mathbf{A} \in \mathbb{R}^{m \times n}$ with unit $\ell_2$-norm columns that satisfies the RIP with isometry constant
  \begin{equation}
    \label{eq:on} \delta_{K + 1} = \frac{1}{\sqrt{K + \frac{1}{4}}},
  \end{equation}
  such that OLS fails to recover the support of $\mathbf{x}$ {\color{black}{from the samples $\mathbf{y} = \mathbf{Ax}$}} in $K$ iterations.
\end{theorem}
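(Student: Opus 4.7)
The plan is to exhibit an explicit small counterexample for $K = 2$. Take $\mathbf{x} = (0, 1, 1)^\top \in \mathbb{R}^3$, supported on $\{2, 3\}$, together with a sampling matrix $\mathbf{A} \in \mathbb{R}^{3 \times 3}$ with unit $\ell_2$-norm columns whose Gram matrix is
\begin{equation*}
\mathbf{A}^\top \mathbf{A} \;=\; \begin{pmatrix} 1 & 1/3 & 1/3 \\ 1/3 & 1 & -1/3 \\ 1/3 & -1/3 & 1 \end{pmatrix}.
\end{equation*}
Here $\mathbf{a}_1$ plays the role of an ``impostor'' column, and placing it at index~$1$ ensures that the deterministic tie-breaking rule of OLS will favor it over the true support indices $\{2, 3\}$.

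First, I would verify that $\delta_{3}(\mathbf{A}) = 1/\sqrt{K + 1/4} = 2/3$. Substituting $u = 1 - \lambda$ into the characteristic polynomial of the above Gram matrix reduces it to $27 u^{3} - 9 u - 2 = (3u - 2)(3u + 1)^{2}$, so the eigenvalues are $\{1/3,\, 4/3,\, 4/3\}$. These are all positive, so the matrix is a valid Gram matrix and is realized by three unit-norm columns in $\mathbb{R}^{3}$ (via, e.g., a Cholesky factorization). The maximum eigenvalue deviation from $1$ equals $|1/3 - 1| = 2/3$, which is exactly $1/\sqrt{K + 1/4}$ when $K = 2$. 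Since $\mathbf{A}$ has only three columns, this single $(K+1)\times(K+1)$ submatrix determines $\delta_{K+1}$.

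Second, I would show that OLS fails on this example. With $\mathbf{y} = \mathbf{A}\mathbf{x} = \mathbf{a}_{2} + \mathbf{a}_{3}$, a direct reading of the Gram matrix gives $\mathbf{a}_{i}^{\top} \mathbf{y} = 2/3$ for every $i \in \{1, 2, 3\}$. Since the columns are unit-norm, the first-iteration OLS criterion $\arg\min_{i} \|\mathbf{P}^{\perp}_{\{i\}} \mathbf{y}\|_{2}^{2}$ reduces to $\arg\max_{i} |\mathbf{a}_{i}^{\top} \mathbf{y}|$, and all three candidate values tie at $2/3$. The deterministic tie-breaking rule then selects $t^{1} = 1 \notin \text{supp}(\mathbf{x})$. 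Consequently, one of the two allotted iterations is spent on an incorrect index, so $T^{2}$ cannot equal $\{2, 3\}$ and OLS fails to recover $\text{supp}(\mathbf{x})$ in $K = 2$ iterations.

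The main obstacle is designing the Gram-matrix entries so that the correlation tie at $2/3$ and the tight isometry constant $2/3$ hold simultaneously. One can arrive at the right entries by setting up a constrained optimization: minimize the maximum spectral deviation of a $3 \times 3$ unit-diagonal symmetric matrix subject to the three inner products $\mathbf{a}_{i}^{\top} \mathbf{y}$ being equal. The stationarity conditions pin down the off-diagonal values to $\pm 1/3$ uniquely; any other choice either breaks the tie (so the tie-breaking rule no longer aids OLS failure) or pushes $\delta_{3}$ strictly above $2/3$. This explains why $1/\sqrt{K + 1/4}$ is the correct threshold for the counterexample at $K = 2$.
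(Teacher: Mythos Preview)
Your proposal is correct and follows essentially the same route as the paper: a $K=2$ counterexample built from a $3\times 3$ Gram matrix with off-diagonals $\pm 1/3$, eigenvalues $\{1/3,4/3,4/3\}$ giving $\delta_{3}=2/3=1/\sqrt{K+1/4}$, and a $2$-sparse $\mathbf{x}$ for which all three first-iteration correlations tie at $2/3$ so the deterministic tie-breaking rule forces $t^{1}=1\notin\text{supp}(\mathbf{x})$. Your instance is in fact the paper's Example~\ref{ex:countex} up to flipping the sign of one column of $\mathbf{A}$ (and an overall sign of $\mathbf{y}$), and your sketch of the constrained optimization that isolates the $\pm 1/3$ entries mirrors the argument in Appendix~\ref{app:example}.
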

One can notice that the gap between conditions~(\ref{eq:o}) and~(\ref{eq:on}) is
very small and vanishes for large $K$, which, therefore, indicates that condition (\ref{eq:o}) is nearly sharp for the OLS algorithm (see Figure~\ref{fig:comparison} for an illustration).

{\color{black}{
It is worth noting that, like in~\cite{foucart2013stability,herzet2012exact,wang2014recovery, chang2014improved},
conditions in Theorem~\ref{thm:1} and~\ref{thm:nece} rely on the assumption that $\mathbf{A}$ has unit $\ell_2$-norm columns.
In many applications, however, this assumption may not hold (e.g., when $\mathbf{A}$  is a Gaussian random matrix~\cite{candes2005decoding}), and one would need to build conditions for general matrices. Interestingly, by exploring the relationship between the RIP for general matrices and their normalized counterparts ({\bf Theorem~\ref{t:transferT}}), one can readily extend Theorem~\ref{thm:1} to the general cases ({\bf Corollary~\ref{thm:general}}).}}

%
%
%
%

The rest of this paper is organized as follows.
In Section~\ref{sec:pre}, we provide some observations and technical lemmas that are useful for our analysis. In Section~\ref{sec:analysis}, we prove Theorems~\ref{thm:1} and~\ref{thm:nece}.
Finally, we summarize and discuss our results in Section~\ref{sec:dis}.

\begin{figure}[t]
\centering
{\hspace{-2mm} \includegraphics[width = 86 mm]{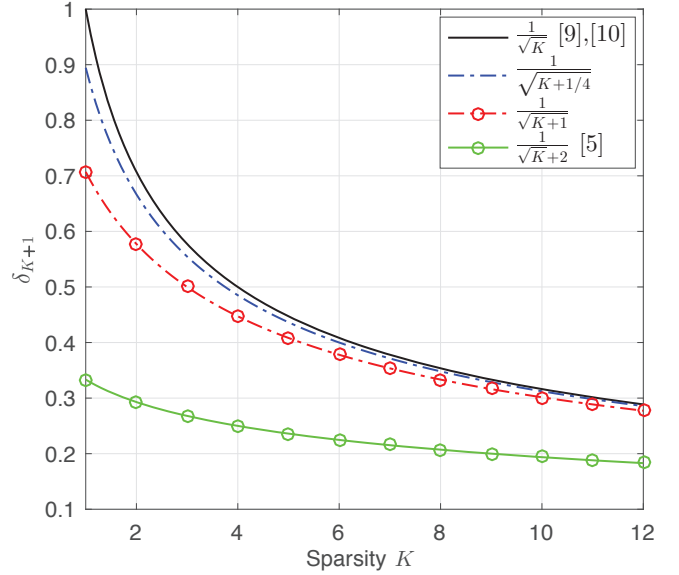}} \vspace{-3mm}
\caption{{\color{black}{An illustration of the upper bounds of $\delta_{K + 1}$.}}} \label{fig:comparison} \vspace{-3mm}
\end{figure}

\section{{\color{black}{Preliminaries}}} \label{sec:pre}
We first explain some notations that will be used throughout the {\color{black}{paper}}.
Let $T = \text{supp}(\mathbf{x}) = \{i|i \in \{1, \cdots, n\} \mbox{ such that }x_i \neq 0\}$
denote the support of vector $\mathbf{x}$. For a subset $S \subset \{1, \cdots, n\}$, let
$T \backslash S = \{i|i \in T \mbox{ but }i \notin S \}$.
Let $\mathbf{x}_{S} \in
\mathbb{R}^{| S |}$ be the restriction of the vector $\mathbf{x}$ to
the elements with indices in $S$. Similarly, let
$\mathbf{A}_{S} \in \mathbb{R}^{m \times | S |}$ be a
submatrix of $\mathbf{A}$ that contains only the columns indexed by $S$.
If $\mathbf{A}_{S}$ is full column rank, then
$\mathbf{A}_{S}^{\dagger} = (\mathbf{A}'_{S}
\mathbf{A}_{S})^{- 1} \mathbf{A}'_{S}$ is the
pseudoinverse of $\mathbf{A}_{S}$, where $\mathbf{A}'_{S}$ denotes the transpose of $\mathbf{A}_{S}$. $\mathbf{P}_{S} =
\mathbf{A}_{S} \mathbf{A}_{S}^{\dagger}$ stands for the
projection onto span$(\mathbf{A}_{S})$.
$\mathbf{P}_{S}^{\bot} = \mathbf{I} - \mathbf{P}_{S}$ is
the projection onto the orthogonal complement of
span$(\mathbf{A}_{S})$, where $\mathbf{I}$ is an identity matrix with $\mathbf{e}_j$ being its $j$-th column. 

\subsection{Observations} \label{sec:observation}
Before we proceed to the proof of Theorem~\ref{thm:1}, we give some useful
observations on the OLS algorithm. As detailed in
Table~\ref{a:OLS}, OLS selects in the ($k + 1$)-th ($0 \leq k < K$) iteration an index that results in the maximum reduction {\color{black}{of}} the residual power, {i.e.,}
\begin{equation}
  t^{k + 1} = \underset{i \in \{1, \cdots, n\}}{\arg \min} \|
  \mathbf{P}^{\bot}_{T^{k} \cup \{i\}} \mathbf{y} \|_2^2.
  \label{eq:identify}
\end{equation}
{\color{black}{By decomposing the projector $\mathbf{P}^{\bot}_{T^{k} \cup \{i\}}$ (see~{\cite{rebollo2002optimized,foucart2013stability,herzet2012exact,wang2014recovery,blumensath2007difference}} for details)}}, an alternative expression of (\ref{eq:identify}) can be given by
\begin{equation}
  t^{k + 1} = \underset{i \in \{1, \cdots, n\}}{\arg \max} \left| \left\langle \frac{\mathbf{P}^{\bot}_{T^{k}}
  \mathbf{A}_i}{\| \mathbf{P}^{\bot}_{T^{k}} \mathbf{A}_i \|_2},
  \mathbf{r}^{k} \right\rangle \right|, \label{eq:golsrule1}
\end{equation}
which offers a geometric interpretation of the selection rule of OLS. Specifically, the
columns of $\mathbf{A}$ are projected onto a subspace that is orthogonal to
the span of previously selected columns, and the normalized projected column that is most
strongly correlated with the current residual is chosen~\cite{herzet2012exact}.
Moreover, one can see from (\ref{eq:golsrule1}) that the behavior of OLS is unchanged by normalizing the columns of
$\mathbf{A}$ because ${\mathbf{P}^{\bot}_{T^{k}}
  \mathbf{A}_i}/{\| \mathbf{P}^{\bot}_{T^{k}} \mathbf{A}_i \|_2}$ would stay the same. Thus, for analytical convenience, we assume throughout the paper that
$\mathbf{A}$ has unit $\ell_2$-norm columns, i.e.,
 \begin{equation}
\|\mathbf{A}_i\|_2 = 1, ~ \text{for}~i = 1, \cdots, n. \label{eq:unitnorm}
\end{equation}
After the support list is updated (i.e., ${T}^{k + 1} = {T}^{k} \cup \{ t^{k + 1} \}$), OLS re-estimates the coefficients of $\mathbf{x}$ over the new list ${T}^{k + 1}$
by solving a least squares problem, which yields\footnote{Note that $\mathbf{A}_{T^k}^\dag$ is well-defined since $\delta_{K+1} \in (0, 1)$
{\color{black}{ensures that $k$ ($ \leq K + 1$) arbitrary columns of $\mathbf{A}$ are linearly independent.}}}
\[
 \mathbf{x}^{k+1}_{T^{k + 1}} = \mathbf{A}_{T^{k + 1}}^\dag \mathbf{y}~~\text{and}~~\mathbf{x}^{k + 1}_{\{1, \cdots, n\} \backslash T^{k + 1}} = \mathbf{0}.
\]
The residual vector is then updated as
\begin{eqnarray}
\mathbf{r}^{k + 1} &\hspace{-2mm} = &\hspace{-2mm} \mathbf{y} - \mathbf{A} \mathbf{x}^{k + 1} ~=~ \mathbf{y} - \mathbf{A}_{T^{k + 1}} \mathbf{x}^{k + 1}_{T^{k + 1}}
\nonumber \\
&\hspace{-2mm} = &\hspace{-2mm} (\mathbf{I}- \mathbf{A}_{T^{k + 1}} \mathbf{A}_{T^{k + 1}}^\dag)\mathbf{y}
= \mathbf{P}^{\bot}_{T^{k + 1}} \mathbf{y}. \label{eq:rkex}
\end{eqnarray}

We now take an observation on \eqref{eq:golsrule1}. Noting that
\begin{equation}
  \mathbf{P}^{\bot}_{T^{k}} = (\mathbf{P}^{\bot}_{T^{k}})' =
  (\mathbf{P}^{\bot}_{T^{k}})^2, \label{eq:pbot}
\end{equation}
$\langle \mathbf{P}^{\bot}_{T^{k}} \mathbf{A}_i, \mathbf{r}^{k}
  \rangle$ can be rewritten as
\begin{align}
  \langle \mathbf{P}^{\bot}_{T^{k}} \mathbf{A}_i, \mathbf{r}^{k}\rangle
  &\hspace{1mm}= \langle \mathbf{A}_i, (\mathbf{P}^{\bot}_{T^{k}})'\mathbf{r}^k \rangle
  \overset{\eqref{eq:rkex}}{=} \langle \mathbf{A}_i, (\mathbf{P}^{\bot}_{T^{k}})'
  \mathbf{P}^{\bot}_{T^{k}} \mathbf{y} \rangle \nonumber \\
  &\overset{\eqref{eq:pbot}}{=}\langle \mathbf{A}_i, \mathbf{P}^{\bot}_{T^{k}} \mathbf{y} \rangle \overset{\eqref{eq:pbot}}{=}  \langle \mathbf{A}_i, \mathbf{r}^{k}\rangle,
\end{align}
which together with \eqref{eq:golsrule1} imply the following proposition.

\begin{proposition}
  \label{prop:p1} {\color{black}{Consider the system model in \eqref{e:model} and the OLS algorithm. Let $\mathbf{r}^k$ be the residual produced in the $k$-th ($0 \leq k < K$) iteration of OLS.}} Then, OLS selects in the ($k + 1$)-th iteration the index
  \begin{equation}
    t^{k + 1} = \underset{i \in \{1, \cdots, n\}}{\arg \max}  \frac{| \langle \mathbf{A}_i, \mathbf{r}^{k}
    \rangle |}{\| \mathbf{P}^{\bot}_{T^{k}} \mathbf{A}_i \|_2}.
    \label{eq:golsrule}
  \end{equation}
\end{proposition}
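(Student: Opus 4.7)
The plan is to argue that Proposition~\ref{prop:p1} is an immediate corollary of the selection rule \eqref{eq:golsrule1} combined with the projector algebra that is already displayed just above the proposition statement. No new ingredient is required beyond the symmetry and idempotency of $\mathbf{P}^{\bot}_{T^k}$ in \eqref{eq:pbot} together with the closed form $\mathbf{r}^k = \mathbf{P}^{\bot}_{T^k}\mathbf{y}$ from \eqref{eq:rkex}.

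First I would isolate the identity $\langle \mathbf{P}^{\bot}_{T^k}\mathbf{A}_i, \mathbf{r}^k\rangle = \langle \mathbf{A}_i, \mathbf{r}^k\rangle$, valid for every index $i$. The derivation moves the projector from the left factor to the right by symmetry, rewrites the right factor via $\mathbf{r}^k = \mathbf{P}^{\bot}_{T^k}\mathbf{y}$, and then collapses the product $(\mathbf{P}^{\bot}_{T^k})'\mathbf{P}^{\bot}_{T^k}$ to a single projector using idempotency. This is precisely the chain of equalities performed in the excerpt immediately before the proposition.

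Next I would take absolute values in the identity above and substitute the result into the numerator of \eqref{eq:golsrule1}. Since the denominator $\|\mathbf{P}^{\bot}_{T^k}\mathbf{A}_i\|_2$ is left untouched and the substitution holds identically in $i$, the $\arg\max$ over $i \in \{1, \ldots, n\}$ is preserved, and the formula \eqref{eq:golsrule} claimed in the proposition follows at once.

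The main obstacle, if one can call it that, is cosmetic rather than technical: one should check that the denominator $\|\mathbf{P}^{\bot}_{T^k}\mathbf{A}_i\|_2$ does not vanish for indices that genuinely compete in the argmax, which is guaranteed by $\delta_{K+1} \in (0,1)$ (so no column $\mathbf{A}_i$ with $i \notin T^k$ lies in $\mathrm{span}(\mathbf{A}_{T^k})$); for $i \in T^k$ the numerator also vanishes and the index is harmlessly excluded by convention. The real value of the proposition is that it isolates $\mathbf{A}_i$ in the numerator, thereby setting up the RIP-friendly form that will be exploited later in the proof of Theorem~\ref{thm:1}.
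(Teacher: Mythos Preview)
Your proposal is correct and follows essentially the same approach as the paper: the paper establishes the identity $\langle \mathbf{P}^{\bot}_{T^k}\mathbf{A}_i,\mathbf{r}^k\rangle=\langle \mathbf{A}_i,\mathbf{r}^k\rangle$ via \eqref{eq:pbot} and \eqref{eq:rkex} in the displayed chain just before the proposition, and then combines it with \eqref{eq:golsrule1} exactly as you describe. Your additional remark on the nonvanishing of the denominator is a nice clarification not made explicit in the paper.
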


{\color{black}{This proposition is a special case of \cite[Proposition~1]{wang2014recovery}}} and is of vital importance in analyzing the recovery condition of OLS. The
identification rule of OLS is akin to the OMP rule. Note that in the ($k + 1$)-th
iteration, OMP picks an index corresponding to the column which is most
strongly correlated with the signal
residual~{\cite{pati1993orthogonal}, i.e.,
\begin{equation}
t^{k + 1} = \underset{i \in \{1, \cdots, n\}}{\arg \max} | \langle \mathbf{A}_i, \mathbf{r}^{k} \rangle |.
\label{eq:golsrule2}
\end{equation}
Clearly the OLS rule differs from \eqref{eq:golsrule2} only in that it has an extra normalization factor
(i.e., $\| \mathbf{P}^{\bot}_{T^{k}} \mathbf{A}_i \|_2$, see the denominator of \eqref{eq:golsrule}). The normalization factor does not affect the first iteration of OLS
because $T^0 = \emptyset$ leads to
 \begin{equation}
\| \mathbf{P}^{\bot}_{T^0} \mathbf{A}_i
\|_2 = \| \mathbf{A}_i \|_2 \overset{\eqref{eq:unitnorm}}{=} 1. \nonumber
\end{equation} 
For the subsequent iterations, however, it
does make a difference since
$
\| \mathbf{P}^{\bot}_{T^k} \mathbf{A}_i
\|_2 \leq \| \mathbf{A}_i \|_2 = 1,~\forall~k \geq 1. 
$
In fact, as will be seen later, this factor
makes the analysis of OLS different and more challenging than that of OMP.

\subsection{Lemmas}
The following lemmas are useful for our analysis. 

\begin{lemma}
  [{\cite[Lemma 1]{dai2009subspace}}]\label{l:monot}If a matrix satisfies the
  RIP of both orders $K_1$ and $K_2$ with $K_1 \leq K_2$, then $\delta_{K_1}
  \leq \delta_{K_2}$.
\end{lemma}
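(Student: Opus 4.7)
The plan is to unwind the definition of the isometry constant and observe that the RIP at a higher sparsity order is a strictly stronger requirement than at a lower one. First I would note the elementary fact that every $K_1$-sparse vector is also $K_2$-sparse whenever $K_1 \leq K_2$, since having at most $K_1$ nonzero entries implies having at most $K_2$ nonzero entries. Hence the set of $K_1$-sparse vectors is contained in the set of $K_2$-sparse vectors.

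Next I would plug this containment into the RIP definition (\ref{eq:RIP}). By hypothesis the inequality
\[
(1-\delta_{K_2})\|\mathbf{x}\|_2^2 \;\leq\; \|\mathbf{A}\mathbf{x}\|_2^2 \;\leq\; (1+\delta_{K_2})\|\mathbf{x}\|_2^2
\]
holds for every $K_2$-sparse $\mathbf{x}$, so by the preceding step it holds in particular for every $K_1$-sparse $\mathbf{x}$. This means $\delta_{K_2}$ is an admissible constant in the RIP definition at order $K_1$. Since $\delta_{K_1}$ is, by the definition stated right after (\ref{eq:RIP}), the minimum of all such admissible constants, we get $\delta_{K_1} \leq \delta_{K_2}$.

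The only point that deserves a moment's care is the "minimum" interpretation of $\delta_{K_1}$: one should confirm that the collection of admissible $\delta$'s at order $K_1$ forms a closed upward ray in $(0,1)$, so that comparisons like $\delta_{K_1} \leq \delta_{K_2}$ are meaningful and the infimum is actually attained. This is immediate, because enlarging $\delta$ only weakens the two-sided bound in (\ref{eq:RIP}). Apart from this bookkeeping, the lemma is a direct consequence of set inclusion and I do not anticipate any real obstacle.
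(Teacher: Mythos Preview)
Your argument is correct and is the standard proof of this monotonicity fact. Note that the paper does not actually prove Lemma~\ref{l:monot} at all; it simply cites it from \cite[Lemma~1]{dai2009subspace}, so there is no paper proof to compare against beyond observing that your derivation is exactly the one underlying the cited result.
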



\begin{lemma}
  [{\cite[Lemma 1]{li2015sufficient}}]\label{l:orthogonalcomp}Let sets $S_1$ {\color{black}{and}}
  $S_2$ satisfy $|S_2 \backslash S_1 | \geq 1$ and let matrix $\mathbf{A}$ obey
  the RIP of order $|S_1 \cup S_2 |$. Then, for any vector $\mathbf{u} \in
  \mathbb{R}^n$ supported on $S_2 \backslash S_1$,
  \begin{equation}
    (1 - \delta_{|S_1 \cup S_2 |}) \| \mathbf{u} \|_2^2 \leq \|
    \mathbf{P}^{\bot}_{S_1} \mathbf{A} \mathbf{u} \|_2^2
    \leq (1 + \delta_{|S_1 \cup S_2 |}) \| \mathbf{u} \|_2^2. \nonumber
  \end{equation}
\end{lemma}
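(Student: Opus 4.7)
The plan is to combine two ingredients: (i) the orthogonal projection $\mathbf{P}^{\bot}_{S_1}\mathbf{A}\mathbf{u}$ coincides with the residual of the best $\ell_2$-approximation of $\mathbf{A}\mathbf{u}$ by a linear combination of the columns of $\mathbf{A}$ indexed by $S_1$; and (ii) since $\mathbf{u}$ is supported on $S_2 \backslash S_1$, its support is disjoint from $S_1$, so concatenating it with any vector supported on $S_1$ yields a vector whose support lies inside $S_1 \cup S_2$, making the RIP directly applicable.

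For the upper bound, I would start from the non-expansive property of the orthogonal projection, namely $\|\mathbf{P}^{\bot}_{S_1}\mathbf{A}\mathbf{u}\|_2 \le \|\mathbf{A}\mathbf{u}\|_2$. Since $\mathbf{u}$ is itself $|S_2 \backslash S_1|$-sparse, the RIP gives $\|\mathbf{A}\mathbf{u}\|_2^2 \le (1+\delta_{|S_2 \backslash S_1|})\|\mathbf{u}\|_2^2$. Lemma~\ref{l:monot}, applied via the obvious inequality $|S_2 \backslash S_1| \le |S_1 \cup S_2|$, then replaces the isometry constant by $\delta_{|S_1 \cup S_2|}$, producing the claimed upper bound.

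For the lower bound, the key step is to rewrite the projection through its variational characterization
\[
\|\mathbf{P}^{\bot}_{S_1}\mathbf{A}\mathbf{u}\|_2^2 \;=\; \min_{\mathbf{w}:\,\mathrm{supp}(\mathbf{w})\subseteq S_1} \|\mathbf{A}(\mathbf{u}-\mathbf{w})\|_2^2.
\]
Every admissible $\mathbf{w}$ has support in $S_1$, which is disjoint from $S_2 \backslash S_1$, so $\mathbf{u}-\mathbf{w}$ is supported on $S_1 \cup S_2$, and the RIP yields $\|\mathbf{A}(\mathbf{u}-\mathbf{w})\|_2^2 \ge (1-\delta_{|S_1 \cup S_2|})\|\mathbf{u}-\mathbf{w}\|_2^2$. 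By the same disjointness, $\|\mathbf{u}-\mathbf{w}\|_2^2 = \|\mathbf{u}\|_2^2 + \|\mathbf{w}\|_2^2 \ge \|\mathbf{u}\|_2^2$, and taking the minimum over $\mathbf{w}$ completes the bound.

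The main obstacle is the lower bound: one must resist applying the RIP directly to $\mathbf{A}\mathbf{u}$ (which would give an inequality in the wrong direction after projection) and instead invoke the minimization characterization, exploiting the disjoint-support decomposition of $S_1$ and $S_2 \backslash S_1$ to discard the contribution of $\mathbf{w}$ cleanly. Once this trick is in place, the upper bound is essentially routine modulo the use of Lemma~\ref{l:monot}.
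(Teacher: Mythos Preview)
Your argument is correct. The upper bound follows immediately from the contractivity of orthogonal projections together with the RIP applied to the $|S_2\backslash S_1|$-sparse vector $\mathbf{u}$ (and Lemma~\ref{l:monot} to pass to $\delta_{|S_1\cup S_2|}$). The lower bound via the variational characterization
\[
\|\mathbf{P}^{\bot}_{S_1}\mathbf{A}\mathbf{u}\|_2^2=\min_{\mathrm{supp}(\mathbf{w})\subseteq S_1}\|\mathbf{A}(\mathbf{u}-\mathbf{w})\|_2^2
\]
is clean and valid: the disjointness of $S_1$ and $S_2\backslash S_1$ both guarantees that $\mathbf{u}-\mathbf{w}$ is $|S_1\cup S_2|$-sparse and that $\|\mathbf{u}-\mathbf{w}\|_2^2\ge\|\mathbf{u}\|_2^2$, exactly as you wrote.

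Note, however, that the paper does not supply its own proof of this lemma; it is quoted verbatim from~\cite[Lemma~1]{li2015sufficient} (and described as a refinement of~\cite[Lemma~3.2]{davenport2010analysis}). So there is no ``paper's proof'' to compare against here. Your proof is a standard and perfectly acceptable justification of the cited statement.
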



As an refinement of \cite[Lemma 3.2]{davenport2010analysis}, Lemma \ref{l:orthogonalcomp}
says that when the columns of matrix $\mathbf{A}$ are projected onto a subspace
that is orthogonal to the span of its partial columns,
the resultant matrix also obeys the RIP.

We next introduce the following new lemma which will play a crucial role in proving our main result.
\begin{lemma}
  \label{l:projld}Suppose that $S \subset \{1, \cdots, n\}$ and let
  $\mathbf{A}$ have unit $\ell_2$-norm columns and satisfy  the RIP of order $|S| + 1$. Then, for any $i \in \{1, \cdots, n\} \backslash S$,
  \begin{equation}
    \label{e:projld}
    \| \mathbf{P}^{\bot}_S \mathbf{A}_i \|_2 \geq \sqrt{1 -
    \delta_{|S| + 1}^2}.
  \end{equation}
\end{lemma}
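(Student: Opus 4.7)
The plan is to reduce the lemma to an upper bound on the $(i,i)$-entry of the inverse Gram matrix $\mathbf{G}^{-1}$, where $\mathbf{G} = \mathbf{A}_{S \cup \{i\}}' \mathbf{A}_{S \cup \{i\}}$, and then exploit the interplay between the unit-norm assumption and the full RIP spectral bracket via a short positive-semidefinite product identity.

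Write $\delta = \delta_{|S|+1}$ throughout the sketch. The RIP hypothesis is equivalent to $(1-\delta)\mathbf{I} \preceq \mathbf{G} \preceq (1+\delta)\mathbf{I}$, and the unit-norm assumption forces $G_{jj} = 1$ for every diagonal entry. I would partition $\mathbf{G}$ into the $|S| \times |S|$ block $\mathbf{G}_{SS}$, the off-diagonal vector $\mathbf{g} = \mathbf{A}_S' \mathbf{A}_i$, and the corner entry $G_{ii} = 1$, and then apply the Schur complement formula for the inverse to obtain $(\mathbf{G}^{-1})_{ii} = 1/(1 - \mathbf{g}' \mathbf{G}_{SS}^{-1} \mathbf{g})$. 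Recognising $\mathbf{g}' \mathbf{G}_{SS}^{-1} \mathbf{g}$ as $\|\mathbf{P}_S \mathbf{A}_i\|_2^2$ and using the Pythagorean identity $\|\mathbf{P}_S \mathbf{A}_i\|_2^2 + \|\mathbf{P}^{\bot}_S \mathbf{A}_i\|_2^2 = \|\mathbf{A}_i\|_2^2 = 1$ yields the key reformulation
\[
(\mathbf{G}^{-1})_{ii} = \frac{1}{\|\mathbf{P}^{\bot}_S \mathbf{A}_i\|_2^2},
\]
so it suffices to prove $(\mathbf{G}^{-1})_{ii} \leq 1/(1-\delta^2)$.

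For the core step, I would observe that $\mathbf{G} - (1-\delta)\mathbf{I}$ and $(1+\delta)\mathbf{I} - \mathbf{G}$ are both PSD and commute (each is a polynomial in $\mathbf{G}$), so their product is also PSD. Expanding the product gives $2\mathbf{G} - \mathbf{G}^2 \succeq (1-\delta^2)\mathbf{I}$, and conjugating by $\mathbf{G}^{-1/2}$ yields $\mathbf{G}^{-1} \preceq (2\mathbf{I} - \mathbf{G})/(1-\delta^2)$. Reading off the $(i,i)$-entry of this PSD inequality and invoking $G_{ii}=1$ delivers $(\mathbf{G}^{-1})_{ii} \leq 1/(1-\delta^2)$, which rearranges to the claimed lower bound $\|\mathbf{P}^{\bot}_S \mathbf{A}_i\|_2 \geq \sqrt{1-\delta^2}$.

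The hardest aspect to anticipate is that a naive one-sided estimate, $\mathbf{g}' \mathbf{G}_{SS}^{-1} \mathbf{g} \leq \|\mathbf{g}\|_2^2 / \lambda_{\min}(\mathbf{G}_{SS}) \leq \delta^2/(1-\delta)$, yields only $\|\mathbf{P}^{\bot}_S \mathbf{A}_i\|_2^2 \geq 1 - \delta^2/(1-\delta)$, strictly weaker than the claimed $1-\delta^2$ whenever $\delta > 0$. The sharper bound really does require using both spectral endpoints $1 \pm \delta$ of $\mathbf{G}$ simultaneously, together with the unit-norm normalisation $G_{ii}=1$, and the PSD-product identity $(\mathbf{G} - (1-\delta)\mathbf{I})((1+\delta)\mathbf{I} - \mathbf{G}) \succeq 0$ is precisely the economical way of combining those two ingredients.
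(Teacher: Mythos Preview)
Your proof is correct. The Schur-complement identity $(\mathbf{G}^{-1})_{ii} = 1/\|\mathbf{P}^{\bot}_S \mathbf{A}_i\|_2^2$ holds as stated, and the Loewner inequality $\mathbf{G}^{-1} \preceq (2\mathbf{I} - \mathbf{G})/(1-\delta^2)$, obtained from the PSD product $(\mathbf{G} - (1-\delta)\mathbf{I})\bigl((1+\delta)\mathbf{I} - \mathbf{G}\bigr) \succeq 0$ after conjugation by $\mathbf{G}^{-1/2}$, yields the diagonal bound $(\mathbf{G}^{-1})_{ii} \leq 1/(1-\delta^2)$ once $G_{ii}=1$ is read off. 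A one-line remark that the case $S=\emptyset$ is immediate (so that the Schur-complement step is not vacuous) would make the write-up complete.

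The paper takes a genuinely different route. It writes $\|\mathbf{P}^{\bot}_S \mathbf{A}_i\|_2 = \sin\theta$, where $\theta$ is the angle between $\mathbf{A}_i = \mathbf{A}\mathbf{e}_i$ and its projection $\mathbf{P}_S\mathbf{A}_i = \mathbf{A}\mathbf{z}$ for some $\mathbf{z}$ supported on $S$, and then invokes an external correlation lemma of Chang and Wu (a sharpening of the classical Cand\`es--Tao cross-term bound) to conclude $\cos\theta \leq \delta_{|S|+1}$ directly. Your argument is fully self-contained and matrix-theoretic, trading that cited angle lemma for the PSD-product trick; the paper's argument is geometric and shorter \emph{given} the external lemma. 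Your diagnosis that the naive one-sided estimate gives only $1 - \delta^2/(1-\delta)$ is also on target: the paper records essentially that bound as a prior result and explicitly notes its inadequacy, and both the Chang--Wu angle lemma and your two-sided PSD inequality are devices for exploiting the full spectral bracket $[1-\delta,\,1+\delta]$ together with the unit-diagonal constraint, rather than a single endpoint.
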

\begin{proof}
See Appendix~\ref{ss:projld}.
\end{proof}

 Lemma~\ref{l:projld} is useful for bounding the normalization factor in the selection of OLS
 (i.e., $\| \mathbf{P}^{\bot}_{T^{k}} \mathbf{A}_i \|_2$ in \eqref{eq:golsrule}). In particular, it improves existing results~\cite{li2015sufficient,foucart2013stability}
\begin{eqnarray}
\| \mathbf{P}^{\bot}_S \mathbf{A}_i \|_2 &\hspace{-2mm} \geq &\hspace{-2mm} \sqrt{1 - \delta_{|S|+1}} \label{eq:libound}
\\
\| \mathbf{P}^{\bot}_S \mathbf{A}_i \|_2 &\hspace{-2mm} \geq &\hspace{-2mm} \sqrt{1 - \frac{\delta_{|S|+1}^2}{1- \delta_{|S|}}},\label{eq:wangbound}
\end{eqnarray}
{\color{black}{which were actually used to prove condition \eqref{eq:wangli} in~\cite{wang2014recovery}.}}
\begin{remark}
 The bound in Lemma~\ref{l:projld} is tight, as the equality of~\eqref{e:projld} is attainable (see Example~\ref{l:projldnecessary} below). 
 The tightness of~\eqref{e:projld} essentially accounts for the near-optimality of the bound in Theorem~\ref{thm:1}. As will be seen in Appendix~\ref{ss:mainright}, it allows to build a unified condition guaranteeing correct selection
at each iteration of the OLS algorithm. By contrast, neither bound~\eqref{eq:libound} nor~\eqref{eq:wangbound} suffices.
\end{remark}

\begin{example}[{\color{black}{Sharpness of~\eqref{e:projld}}}]
  \label{l:projldnecessary}
  Let~$\rho \in (0, 1)$, $S = \{1, \cdots, |S|\}$, and
  \begin{equation}
  \mathbf{A} =
    \left[ \begin{array}{cccc}
          &                      &         &        \rho         \\
          &    \mathbf{I}_{|S|}  &         &        0             \\
          &                      &         &        \vdots        \\
          &                      &         &        0             \\
     0    &    \cdots            &    0    &        \sqrt{1 - \rho^2}
    \end{array} \right]_{(|S|+1) \times (|S|+1)}.    \nonumber
  \end{equation}
  Then one can verify that  $\mathbf{A}'\mathbf{A}$ has eigenvalues
  $
  \lambda_1 = 1 - \rho$, $\lambda_2 = \cdots = \lambda_{|S|} = 1$, $\lambda_{|S| + 1} = 1 + \rho,
  $
  which, by definition of the RIP and \cite[Remark 1]{dai2009subspace}, implies that $\mathbf{A}$ satisfies \begin{eqnarray}
  \delta_{|S|+1} & \hspace{-2mm} = & \hspace{-2mm}  \max \left\{1 - \lambda_{\min} (\mathbf{A}'\mathbf{A}),~ \lambda_{\max} (\mathbf{A}'\mathbf{A}) - 1 \right \} \nonumber \\
  & \hspace{-2mm} = & \hspace{-2mm}  \max \left \{1 - \lambda_1,  \lambda_{|S| + 1} - 1 \right\}  = \rho, \nonumber
  \end{eqnarray}
where $\lambda_{\min} (\mathbf{A}'\mathbf{A})$ and $\lambda_{\max} (\mathbf{A}'\mathbf{A})$ denote the minimal and maximal eigenvalues of $\mathbf{A}'\mathbf{A}$, respectively.

  One can also verify that
\begin{equation}
  \mathbf{P}^{\bot}_{S}=\mathbf{I}_{|S| + 1} - \mathbf{A}_{S} \mathbf{A}^\dag_{S}
=   \left[ \begin{array}{cccc}
          &                  &         &        0             \\
          & \hspace{1mm} \mathbf{0}_{|S|}\vspace{1mm} &         &        \vdots       \\
          &                  &         &        0             \\
     0    &    \cdots        &    0    &        1
    \end{array} \right], \nonumber
\end{equation}
and hence
$
\big \| \mathbf{P}^{\bot}_S \mathbf{A}_{|S| + 1} \big\|_2 = \sqrt{1 - \rho^2} = \sqrt{1 - \delta_{|S| + 1}^2}.$

\end{example}

The last lemma gives a connection between the isometry constant and the lower bound of residual power of OLS.
\begin{lemma}
\label{l:mainlemma}
{\color{black}{Consider the system model in \eqref{e:model} and the OLS algorithm.}} For any constant $\alpha > 0$ and $k \in \{0, \cdots, K - 1\}$, if the sampling matrix $\mathbf{A}$ obeys the RIP with 
\begin{equation}
\label{inq:deltaalpha}
\delta_{K - k + 1}<\frac{1}{\sqrt{\alpha+1}},
 \end{equation}
then the residual $\mathbf{r}^{k}$ of OLS satisfies
\begin{equation}
\label{inq:mainalpha}
 \| \mathbf{r}^{k} \|_2^2 > \sqrt{\alpha} ~ \| \mathbf{x}_{T \backslash T^k} \|_2 |\langle \mathbf{A}_{j}, \mathbf{r}^{k} \rangle|,~{\color{black}{j \in \{1, \cdots, n\} \backslash T}}.
  \end{equation}
\end{lemma}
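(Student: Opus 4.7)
The plan is to invoke the RIP condition of order $K-k+1$ by constructing a test vector supported on $(T\setminus T^k)\cup\{j\}$, a set whose cardinality is at most $K-k+1$ under the inductive hypothesis $T^k\subseteq T$ that is operative in the proof of Theorem~\ref{thm:1}. The starting observation is that, since $\mathbf{P}^{\bot}_{T^k}\mathbf{A}_i=\mathbf{0}$ for every $i\in T^k$, the residual admits the compact form $\mathbf{r}^k=\mathbf{P}^{\bot}_{T^k}\mathbf{A}_{T\setminus T^k}\mathbf{x}_{T\setminus T^k}$, and also satisfies $\langle\mathbf{A}_i,\mathbf{r}^k\rangle=0$ for $i\in T^k$, so that $\|\mathbf{r}^k\|_2^2=\sum_{i\in T\setminus T^k}x_i\langle\mathbf{A}_i,\mathbf{r}^k\rangle$ (a Cauchy--Schwarz consequence of which will be useful downstream).

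First I would parameterize a test vector $\mathbf{v}(c)\in\mathbb{R}^n$ by $\mathbf{v}(c)_{T\setminus T^k}=\mathbf{x}_{T\setminus T^k}$, $v_j(c)=c$ for $c\in\mathbb{R}$, and zero outside $(T\setminus T^k)\cup\{j\}$; then $\|\mathbf{v}(c)\|_2^2=\|\mathbf{x}_{T\setminus T^k}\|_2^2+c^2$ and $|\mathrm{supp}(\mathbf{v}(c))|\le K-k+1$. The RIP upper bound of order $K-k+1$ gives $\|\mathbf{A}\mathbf{v}(c)\|_2^2\le(1+\delta_{K-k+1})(\|\mathbf{x}_{T\setminus T^k}\|_2^2+c^2)$, and combining with the contraction $\|\mathbf{P}^{\bot}_{T^k}\mathbf{A}\mathbf{v}(c)\|_2^2\le\|\mathbf{A}\mathbf{v}(c)\|_2^2$ and the expansion $\|\mathbf{P}^{\bot}_{T^k}\mathbf{A}\mathbf{v}(c)\|_2^2=\|\mathbf{r}^k\|_2^2+2c\langle\mathbf{A}_j,\mathbf{r}^k\rangle+c^2\|\mathbf{P}^{\bot}_{T^k}\mathbf{A}_j\|_2^2$ (where the cross term uses $\langle\mathbf{r}^k,\mathbf{P}^{\bot}_{T^k}\mathbf{A}_j\rangle=\langle\mathbf{r}^k,\mathbf{A}_j\rangle$, since $\mathbf{r}^k$ lies in the range of $\mathbf{P}^{\bot}_{T^k}$) produces a quadratic inequality in $c$ that must hold for every real $c$. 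Because $\|\mathbf{P}^{\bot}_{T^k}\mathbf{A}_j\|_2^2\le 1<1+\delta_{K-k+1}$, the leading coefficient is strictly negative, so the discriminant condition yields the pivotal upper bound $|\langle\mathbf{A}_j,\mathbf{r}^k\rangle|^2\le\bigl(1+\delta_{K-k+1}-\|\mathbf{P}^{\bot}_{T^k}\mathbf{A}_j\|_2^2\bigr)\bigl((1+\delta_{K-k+1})\|\mathbf{x}_{T\setminus T^k}\|_2^2-\|\mathbf{r}^k\|_2^2\bigr)$.

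To close the argument, I would combine this bound with Lemma~\ref{l:projld} (giving $\|\mathbf{P}^{\bot}_{T^k}\mathbf{A}_j\|_2^2\ge 1-\delta_{k+1}^2$, which controls the first factor) and with a lower bound on $\|\mathbf{r}^k\|_2^2$ derived from Lemma~\ref{l:orthogonalcomp} applied with $S_1=T^k$, $S_2=T$ (giving $\|\mathbf{r}^k\|_2^2\ge(1-\delta_K)\|\mathbf{x}_{T\setminus T^k}\|_2^2$). After absorbing the ambient $\delta_K$ and $\delta_{k+1}$ via the monotonicity Lemma~\ref{l:monot} and the hypothesis $\delta_{K-k+1}<1/\sqrt{\alpha+1}$, the target inequality $\|\mathbf{r}^k\|_2^4>\alpha\|\mathbf{x}_{T\setminus T^k}\|_2^2|\langle\mathbf{A}_j,\mathbf{r}^k\rangle|^2$ is expected to reduce to a scalar polynomial inequality in $\delta_{K-k+1}$ that is tight precisely under the assumed threshold.

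The main obstacle is the final algebraic step: one must combine the upper bound on $|\langle\mathbf{A}_j,\mathbf{r}^k\rangle|^2$ with the lower bound on $\|\mathbf{r}^k\|_2^2$ so that the resulting scalar inequality collapses exactly to $\delta_{K-k+1}^2(\alpha+1)<1$, rather than to a strictly suboptimal threshold such as $1/\sqrt{\alpha}$ or $1/(1+\sqrt{\alpha})$. In particular, one must carefully handle the nonlinear interplay between the two ``factor'' terms $1+\delta_{K-k+1}-\|\mathbf{P}^{\bot}_{T^k}\mathbf{A}_j\|_2^2$ and $(1+\delta_{K-k+1})\|\mathbf{x}_{T\setminus T^k}\|_2^2-\|\mathbf{r}^k\|_2^2$, and make sure any stray RIP constants of orders other than $K-k+1$ can be reabsorbed through monotonicity---a delicacy that is exactly what makes Lemma~\ref{l:projld}'s sharp bound $\sqrt{1-\delta_{|S|+1}^2}$ (rather than the looser bounds in \eqref{eq:libound}--\eqref{eq:wangbound}) the essential new ingredient.
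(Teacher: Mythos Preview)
Your discriminant inequality is derived correctly, but the plan does not close to the sharp threshold $\delta_{K-k+1}<1/\sqrt{\alpha+1}$, and the paper's proof proceeds by a genuinely different mechanism.

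First, two concrete failures of the outlined endgame. (i) The constants $\delta_{k+1}$ (from Lemma~\ref{l:projld}) and $\delta_K$ (from the lower bound on $\|\mathbf r^k\|_2^2$) cannot be ``absorbed via monotonicity'' into $\delta_{K-k+1}$: one has $\delta_{k+1}\le\delta_{K-k+1}$ only when $k\le K/2$, and $\delta_K\le\delta_{K-k+1}$ only when $k\le 1$. Since Lemma~\ref{l:mainlemma} assumes control solely of $\delta_{K-k+1}$, these extra constants are simply not available. (ii) Even granting all constants equal to $\delta=\delta_{K-k+1}$, your bound becomes $|\langle\mathbf A_j,\mathbf r^k\rangle|^2\le\delta(1+\delta)\cdot 2\delta\|\mathbf x_{T\setminus T^k}\|_2^2$ and $\|\mathbf r^k\|_2^2\ge(1-\delta)\|\mathbf x_{T\setminus T^k}\|_2^2$, and the resulting sufficient condition $(1-\delta)^2>2\alpha\delta^2(1+\delta)$ is strictly weaker than $\delta<1/\sqrt{\alpha+1}$ (for $\alpha=3$, $\delta=1/2$, the left side is $1/4$ while the right side is $9/4$). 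The root cause is structural: a single test vector together with only the upper RIP inequality cannot balance the two sides sharply.

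Second, Lemma~\ref{l:projld} is \emph{not} used in the paper's proof of Lemma~\ref{l:mainlemma}; it enters only later, in the proof of \eqref{e:mainright}. What the paper does instead is a two-vector polarization in the spirit of \cite{mo2015sharp,wen2015sharp}: set $\beta=\bigl(1-\sqrt{\alpha+1}\bigr)/\sqrt{\alpha}$, form the pair $\mathbf u+\mathbf v$ and $\beta^2\mathbf u-\mathbf v$ (with $\mathbf u$ carrying $\mathbf x_{T\setminus T^k}$ and $\mathbf v$ a signed multiple of $\mathbf e_j$), and evaluate $\|\mathbf P^{\bot}_{T^k}\mathbf A(\mathbf u+\mathbf v)\|_2^2-\|\mathbf P^{\bot}_{T^k}\mathbf A(\beta^2\mathbf u-\mathbf v)\|_2^2$ in two ways. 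One expansion yields exactly $\|\mathbf r^k\|_2^2-\sqrt{\alpha}\,\|\mathbf x_{T\setminus T^k}\|_2|\langle\mathbf A_j,\mathbf r^k\rangle|$; applying the lower RIP bound (via Lemma~\ref{l:orthogonalcomp}) to the first term and the upper RIP bound to the second gives $\bigl(1-\delta_{K-k+1}\sqrt{\alpha+1}\bigr)\|\mathbf x_{T\setminus T^k}\|_2^2>0$. The asymmetric use of both RIP directions on two carefully weighted vectors is precisely what produces the clean factor $\sqrt{\alpha+1}$; a one-sided discriminant argument cannot reproduce it.
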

\begin{proof}
See Appendix~\ref{ss:mainlemma}.
\end{proof}

\begin{remark}  \label{re:4444}
Lemma~\ref{l:mainlemma} is essentially motivated by~\cite[Lemma II.2]{mo2015sharp} and \cite[Lemma 1]{wen2015sharp}, but the result is stronger. Specifically, the RIP condition (i.e., \eqref{inq:deltaalpha}) depends on ``adjustable'' parameter $\alpha$ which can be any positive number.
In contrast,~\cite{mo2015sharp,wen2015sharp} focused only on the case where $\alpha$ is a positive integer. {\color{black}{In fact, $\alpha$ taking non-integer values is precisely needed for proving Proposition~\ref{p:mainright}}}, which is a main step of the proof for Theorem~\ref{thm:1} (see Appendix~\ref{ss:mainright}). {\color{black}{Moreover, we would like to point out that the result in Lemma~\ref{l:mainlemma} works for both OMP and OLS; whereas, those of~\cite{mo2015sharp,wen2015sharp} are valid only for the OMP case.}}
\end{remark}


\section{Main Analysis} \label{sec:analysis}

\subsection{Connection to Existing Analyses} \label{sec:distinction}

The proof of Theorem~\ref{thm:1} is closely related to~\cite{liu2012orthogonal,mo2012remarks,wang2012Recovery,mo2015sharp}, in which recovery conditions for the OMP algorithm were established. However, there is a key distinction between the analyses of these works and our proof.
Before discussing the distinction, we first mention a common property of OMP and OLS, which says that if the previous $k$ selections of OMP/OLS ($0 \leq k < K$) are correct ($T^k \subset T$), then the residual $\mathbf{r}^k$ can be viewed as the samples of a $K$-sparse vector with matrix $\mathbf{A}$. More precisely,
\begin{equation}
\label{eq:romp}
\mathbf{r}^k = \mathbf{y} - \mathbf{A} \mathbf{x}^k = \mathbf{Az},
\end{equation}
where $\mathbf{z} = \mathbf{x} - \mathbf{x}^k$ is a $K$-sparse vector supported on $T$.\footnote{Note that $\mathbf{x}^k$ is supported on $T^k$ ($\subset T$) and $\mathbf{x}$ is supported on $T$.}
For the OMP algorithm, this property enables a recursive proof for the recovery condition, since the index selection at every iteration of OMP is based on the correlation between $\mathbf{r}^k$ and $\mathbf{A}$.
To be more concrete, since $\mathbf{z}$ is also $K$-sparse,
the condition for the first iteration of OMP ($k = 0$), which guarantees to select a support index of the $K$-sparse vector $\mathbf{x}$ from its samples $\mathbf{y} = \mathbf{A x}$, would naturally guarantee to select a support index of $\mathbf{z}$ from its ``samples'' $\mathbf{r}^k = \mathbf{Az}$.
Notably, the recursive proof substantially simplifies the analysis of OMP, as one only needs to consider the first iteration (see \cite{liu2012orthogonal, mo2012remarks, wang2012Recovery,wang2012Generalized,mo2015sharp}).

For the OLS case, however, \eqref{eq:romp} does not enable a recursive proof due to the normalization factor arised in its selection rule (see the right-hand side of~\eqref{eq:golsrule}). Specifically, while the index selection at the first iteration of OLS is based on the correlation between $\mathbf{y}$ and $\mathbf{A}$, that of the subsequent iterations would be based on the correlation between $\mathbf{r}^k$ and a different matrix, whose columns are composed of $\left\{{\mathbf{A}_i}/{\| \mathbf{P}^{\bot}_{T^{k}} \mathbf{A}_i \|_2}\right\}_{i = 1, \cdots, n}.$
As such, the condition for the first iteration of OLS does not apply immediately to its subsequent iterations. Therefore,  to build the recovery condition for the OLS algorithm, we need to consider the first iteration as well as the subsequent ones.

In analyzing the subsequent iterations of OLS, the difficulty lies in dealing with the normalization factors $\{\| \mathbf{P}^{\bot}_{T^{k}} \mathbf{A}_i \|_2\}_{i = 1, \cdots, n}$.
The primary novelties of our techniques are 
i) to incorporate simultaneously the normalization factor into the RIP condition {\color{black}{as well as into the estimator of}} $\| \mathbf{r}^{k} \|_2^2$, which is done by applying
 $\alpha = {\sqrt{|T \backslash T^k|}}/{\| \mathbf{P}^{\bot}_{T^{k}} \mathbf{A}_{i} \|_2}$ in Lemma~\ref{l:mainlemma},
and
ii) to bound ${\| \mathbf{P}^{\bot}_{T^{k}} \mathbf{A}_{i} \|_2}$ with a tight inequality established in Lemma~\ref{l:projld}. 
Interestingly, in this way we are able to obtain a unified condition for every iteration of the OLS algorithm.

The proof for Theorem~\ref{thm:1} was also motivated by~\cite{wang2014recovery}. There, OLS finding multiple atoms per iteration has been studied under the name of multiple OLS (MOLS).

\begin{theorem}[Wang and Li~\cite{wang2014recovery}] \label{thm:333}
Let $L  \leq \min \{K,
\frac{m}{K} \}$ be the number of indices picked at each iteration
of MOLS. Let $\mathbf{x} \in \mathcal{R}^n$ be any $K$-sparse signal and
$\mathbf{A} \in \mathcal{R}^{m \times n}$ be the sampling
matrix with unit $\ell_2$-norm columns. Then if $\mathbf{A}$ satisfies the RIP
  with
  \begin{subequations}
\begin{align}
&\delta_{K+1}  < \frac{1}{\sqrt{K} + 2},~~~~~L=1,
 \label{eq:mu1}\\
&\delta_{LK}  < \frac{\sqrt{L}}{\sqrt{K} + 2\sqrt{L}},~~L>1.
 \label{eq:mu2}  
\end{align}
\end{subequations}
  MOLS exactly recovers $\mathbf{x}$ from $\mathbf{y} = \mathbf{A x}$ within $K$ steps.
\end{theorem}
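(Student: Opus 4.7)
The natural plan is an induction on the iteration counter $k$ that tracks the progress of MOLS on the true support. Specifically, I would prove that, as long as $T\not\subseteq T^k$, the block $S^{k+1}$ of $L$ indices selected at the $(k+1)$-th iteration satisfies $S^{k+1}\cap(T\setminus T^k)\neq\emptyset$. Iterating this claim at most $K$ times guarantees $T\subseteq T^K$ (since at least one previously-missing support index is absorbed at every round), which in turn forces $\mathbf{x}^K=\mathbf{x}$ via the least-squares update in the description of Table~\ref{a:OLS}.

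\textbf{Per-iteration comparison.} Write $W:=T\setminus T^k$ and $F:=\{1,\ldots,n\}\setminus(T\cup T^k)$. Proposition~\ref{prop:p1} extends to MOLS: $S^{k+1}$ collects the $L$ indices $i$ with largest ratios $\mu_i:=|\langle\mathbf{A}_i,\mathbf{r}^k\rangle|/\|\mathbf{P}^{\perp}_{T^k}\mathbf{A}_i\|_2$. To establish $S^{k+1}\cap W\neq\emptyset$ it is enough to show that $\max_{i\in W}\mu_i$ strictly exceeds the $L$-th largest value of $\{\mu_j\}_{j\in F}$. On the correct side, I would apply Lemma~\ref{l:mainlemma} with $\alpha$ chosen to match the worst incorrect competitor, combined with Lemma~\ref{l:projld} (or, as Wang and Li did, the weaker bound \eqref{eq:wangbound}) to get a lower bound roughly of the form $(1-\delta)\|\mathbf{x}_W\|_2/\sqrt{|W|}$. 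On the incorrect side, I would collect the $L$ largest-correlation columns in $F$ into an index set $F_L$, apply Lemma~\ref{l:orthogonalcomp} to the vector $(\mathbf{P}^{\perp}_{T^k}\mathbf{A}_{F_L})^{\prime}\mathbf{r}^k$, and use that its $L$-th largest entry is at most $1/\sqrt{L}$ times its $\ell_2$ norm. Together with a uniform lower bound on $\|\mathbf{P}^{\perp}_{T^k}\mathbf{A}_j\|_2$ from Lemma~\ref{l:projld}, this delivers an upper bound on the incorrect side of roughly $\delta\,\|\mathbf{x}_W\|_2/\sqrt{L}$.

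\textbf{Assembling the condition.} Equating these two bounds produces an inequality of the schematic form $\delta(\sqrt{K}+2\sqrt{L})<\sqrt{L}$, which rearranges to the condition \eqref{eq:mu2}. The relevant RIP order is $|T^k\cup W\cup F_L|\le(k+1)L+K$, which stays bounded by $LK$ across the run provided one uses Lemma~\ref{l:monot} together with the inductive control $|T^k|\le kL$; this explains the appearance of $\delta_{LK}$. The single-column case $L=1$ degenerates: the block $\ell_2$-versus-max step collapses, and the same calculation recovers $\delta_{K+1}<1/(\sqrt{K}+2)$ as in \eqref{eq:mu1} and \eqref{eq:wangli}.

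\textbf{Main obstacle.} The delicate point is handling the normalization factor $\|\mathbf{P}^{\perp}_{T^k}\mathbf{A}_i\|_2$, which appears in the denominator on both sides and cannot simply be canceled because the RIP only controls it in a one-sided way. Matching the correct-side lower bound with the incorrect-side upper bound therefore forces a combined use of the lower bound from Lemma~\ref{l:projld}/\eqref{eq:wangbound} and the crude upper bound $\|\mathbf{P}^{\perp}_{T^k}\mathbf{A}_i\|_2\le 1$, and it is precisely this asymmetry, together with the need to track $L$ simultaneous incorrect correlations via Lemma~\ref{l:orthogonalcomp}, that produces the extra $+2\sqrt{L}$ in the denominator of the threshold and prevents an immediate reduction to the corresponding OMP analysis.
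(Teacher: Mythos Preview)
This theorem is not proved in the present paper: it is quoted verbatim from Wang and Li~\cite{wang2014recovery} as background in Section~\ref{sec:distinction}, and no proof is supplied here. Consequently there is no ``paper's own proof'' against which to compare your proposal.

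That said, your sketch is a reasonable reconstruction of the argument one would expect in~\cite{wang2014recovery}, and it is consistent with the hints the present paper gives about that proof. In particular, the paper explicitly notes that the bounds \eqref{eq:libound}--\eqref{eq:wangbound} ``were actually used to prove condition~\eqref{eq:wangli} in~\cite{wang2014recovery},'' and that Lemma~\ref{l:projld} could sharpen~\cite[eq.~E.6]{wang2014recovery}; your plan to control the normalization factor via \eqref{eq:wangbound} on the correct side and the trivial bound $\|\mathbf{P}^{\perp}_{T^k}\mathbf{A}_i\|_2\le 1$ on the incorrect side matches this description. Your handling of the $L$ incorrect competitors via an $\ell_2$-to-$\ell_\infty$ step and Lemma~\ref{l:orthogonalcomp}, and your bookkeeping of the RIP order as $\le LK$, are also the natural ingredients. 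One caveat: Lemma~\ref{l:mainlemma} as stated in this paper is tailored to the single-index residual analysis (it bounds $|\langle\mathbf{A}_j,\mathbf{r}^k\rangle|$ for a fixed $j$), so invoking it directly for the MOLS correct-side lower bound is anachronistic; the original Wang--Li argument would instead use a direct RIP estimate of the form $\|\mathbf{A}_W'\mathbf{r}^k\|_\infty\ge\|\mathbf{r}^k\|_2^2/(\sqrt{|W|}\,\|\mathbf{x}_W\|_2)$ combined with Lemma~\ref{l:orthogonalcomp}, which is essentially what you describe in words.
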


In this work, some newly developed technologies, such as Lemma~\ref{l:projld} and~\ref{l:mainlemma},  were used to advance \eqref{eq:mu1}. We mention that these lemmas may also be utilized to  refine~\eqref{eq:mu2}. For example, since Lemma~\ref{l:projld} offers a sharp version of~\eqref{eq:wangbound}, it can be used to derive a tighter result of~\cite[eq. E.6]{wang2014recovery}. 
 Also, Lemma~\ref{l:mainlemma} can be useful to better bound the residual power $\| \mathbf{r}^{k} \|_2$ of MOLS. However, as the analysis of~\cite{wang2014recovery} involves many approximations due to existence of $L$, we expect that the analysis optimizing the constant in~\eqref{eq:mu2} will be complicated. Whether it is possible to obtain a condition analogous to \eqref{eq:o} (e.g., $\delta_{LK} < \sqrt{\frac{{L}}{{K + L}}}$, $L > 1$) remains an interesting open question.


\subsection{Proof of Theorem~\ref{thm:1}}\label{ss:l2}

To prove Theorem~\ref{thm:1}, it suffices to show that the OLS algorithm selects
  a correct index in  every iteration. Our proof works by induction.
  Suppose that OLS makes correct selections in each of the previous $k$ iterations ($T^{k} \subset T$).
  Then we shall show that the algorithm also
  selects a correct index in the ($k + 1$)-th iteration, that is, $
  t^{k + 1} \in T\backslash T^k$.\footnote{Indices in $T^k$ cannot be re-selected in subsequent iterations of OLS because $\langle \mathbf{A}_i, \mathbf{r}^k \rangle_{i \in T^k} = 0$. See~\cite[Lemma 7]{wang2012Recovery} for more details.}
  Here, we assume that $0 \leq k < |T|$. Thus, the first selection of OLS corresponds to the case of
  $k = 0$, for which case our induction hypothesis $T^{k} \subset T$ still holds because $T^0 = \emptyset$.

%

Our proof relies on the following two propositions, which respectively characterize a lower bound for
$
\max_{i \in T \backslash T^k} {|\langle \mathbf{A}_i, \mathbf{r}^{k} \rangle|}/{\| \mathbf{P}^{\bot}_{T^{k}} \mathbf{A}_i \|_2}
$
and an upper bound for
$
\max_{i \in \{1, \cdots, n\} \backslash T} {|\langle \mathbf{A}_i, \mathbf{r}^{k} \rangle|}/{\| \mathbf{P}^{\bot}_{T^{k}} \mathbf{A}_i \|_2}.
$ Clearly if the former dominates the latter, then a reliable selection is ensured at the ($K + 1$)-th iteration of the OLS algorithm.

\begin{proposition}
  \label{p:mainleft}
Consider the system model in \eqref{e:model} and the OLS algorithm. Let $\mathbf{r}^k$ be the residual produced in the $k$-th ($0 \leq k < K$) iteration of OLS and suppose $T^k \subset T$. Then if $\mathbf{A}$ satisfies \eqref{eq:unitnorm} and the RIP with \eqref{eq:o}, OLS satisfies
  \begin{eqnarray}
      \label{e:mainleft}
    \max_{i \in T \backslash T^k} \frac{|\langle \mathbf{A}_i, \mathbf{r}^{k} \rangle|}{\| \mathbf{P}^{\bot}_{T^{k}} \mathbf{A}_i \|_2} &\hspace{-2mm}\geq&\hspace{-2mm} \frac{\| \mathbf{r}^{k} \|_2^2}{\sqrt{|T \backslash T^k|} \| \mathbf{x}_{T \backslash T^k} \|_2}, \\
    \label{e:mainright} \max_{i \in \{1, \cdots, n\} \backslash T} \frac{|\langle \mathbf{A}_i, \mathbf{r}^{k} \rangle|}{\| \mathbf{P}^{\bot}_{T^{k}} \mathbf{A}_i \|_2} &\hspace{-2mm}<&\hspace{-2mm} \frac{\| \mathbf{r}^{k} \|_2^2}{\sqrt{|T \backslash T^k|} \| \mathbf{x}_{T \backslash T^k} \|_2}.
  \end{eqnarray}
\end{proposition}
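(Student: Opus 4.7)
The plan is to prove the two inequalities separately. For the lower bound~\eqref{e:mainleft}, I would begin with the observation that $T^k \subset T$ implies $\mathbf{r}^k = \mathbf{P}^{\bot}_{T^k}\mathbf{y} = \mathbf{P}^{\bot}_{T^k}\mathbf{A}_{T\backslash T^k}\mathbf{x}_{T \backslash T^k}$. Invoking the identity $\langle \mathbf{P}^{\bot}_{T^k}\mathbf{A}_i, \mathbf{r}^k\rangle = \langle \mathbf{A}_i, \mathbf{r}^k\rangle$ derived in Section~\ref{sec:observation} then gives the expansion $\|\mathbf{r}^k\|_2^2 = \sum_{i \in T\backslash T^k} x_i\, \langle \mathbf{A}_i,\mathbf{r}^k\rangle$. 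Denoting the left-hand side of~\eqref{e:mainleft} by $M$, I would bound each $|\langle \mathbf{A}_i,\mathbf{r}^k\rangle|$ by $M\,\|\mathbf{P}^{\bot}_{T^k}\mathbf{A}_i\|_2$, apply Cauchy--Schwarz, and use the normalization $\|\mathbf{P}^{\bot}_{T^k}\mathbf{A}_i\|_2 \leq \|\mathbf{A}_i\|_2 = 1$ to arrive at $\|\mathbf{r}^k\|_2^2 \leq M\,\sqrt{|T\backslash T^k|}\,\|\mathbf{x}_{T\backslash T^k}\|_2$; a rearrangement yields~\eqref{e:mainleft}.

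For the upper bound~\eqref{e:mainright}, the idea is to invoke Lemma~\ref{l:mainlemma} separately for each $i \in \{1,\ldots,n\}\backslash T$ with the $i$-dependent choice $\alpha_i = |T\backslash T^k|/\|\mathbf{P}^{\bot}_{T^k}\mathbf{A}_i\|_2^2$. This choice is calibrated so that the conclusion of the lemma rearranges to
\[
\frac{|\langle \mathbf{A}_i,\mathbf{r}^k\rangle|}{\|\mathbf{P}^{\bot}_{T^k}\mathbf{A}_i\|_2} \;<\; \frac{\|\mathbf{r}^k\|_2^2}{\sqrt{|T\backslash T^k|}\,\|\mathbf{x}_{T\backslash T^k}\|_2},
\]
after which taking the maximum over $i$ produces~\eqref{e:mainright}.

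The hard part will be verifying the RIP hypothesis of Lemma~\ref{l:mainlemma} for this choice, namely $\delta_{K-k+1}^2 < \|\mathbf{P}^{\bot}_{T^k}\mathbf{A}_i\|_2^2/\bigl(|T\backslash T^k|+\|\mathbf{P}^{\bot}_{T^k}\mathbf{A}_i\|_2^2\bigr)$, solely from the global assumption $\delta_{K+1} < 1/\sqrt{K+1}$. Here Lemma~\ref{l:projld} is the decisive ingredient: it supplies the sharp lower bound $\|\mathbf{P}^{\bot}_{T^k}\mathbf{A}_i\|_2^2 \geq 1-\delta_{k+1}^2$. Combined with the monotonicity $\delta_{k+1},\delta_{K-k+1} \leq \delta_{K+1}$ from Lemma~\ref{l:monot} and writing $d = \delta_{K+1}$, the verification should reduce to the scalar inequality $d^2(K-k+2) < 1+d^4$. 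For $k \geq 1$ this will follow from $K-k+2 \leq K+1$ together with $d^2 < 1/(K+1)$. I expect the boundary case $k=0$ to need separate treatment: since $\mathbf{A}$ has unit-norm columns one has $\delta_1=0$, hence $\|\mathbf{P}^{\bot}_{T^0}\mathbf{A}_i\|_2 = 1$, and the RIP condition collapses to the hypothesis itself. It is precisely the tightness of Lemma~\ref{l:projld} that should allow the constant $1/\sqrt{K+1}$ to emerge; substituting either of the looser bounds~\eqref{eq:libound} or~\eqref{eq:wangbound} here would force a strictly worse RIP constant and thus preclude the near-optimal guarantee of Theorem~\ref{thm:1}.
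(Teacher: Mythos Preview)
Your proposal is correct and follows essentially the same route as the paper. For \eqref{e:mainleft} the paper also expands $\|\mathbf{r}^k\|_2^2$ via $\mathbf{r}^k=\mathbf{P}^{\bot}_{T^k}\mathbf{A}_{T\backslash T^k}\mathbf{x}_{T\backslash T^k}$, drops the denominator using $\|\mathbf{P}^{\bot}_{T^k}\mathbf{A}_i\|_2\leq 1$, and applies Cauchy--Schwarz; for \eqref{e:mainright} it applies Lemma~\ref{l:mainlemma} with exactly your choice $\alpha=|T\backslash T^k|/\|\mathbf{P}^{\bot}_{T^k}\mathbf{A}_{j_0}\|_2^2$ (only at the maximizer $j_0$ rather than all $i$, an immaterial difference), verifies the hypothesis via Lemma~\ref{l:projld} and monotonicity, and treats the boundary case $k=0$ separately using $\|\mathbf{P}^{\bot}_{T^0}\mathbf{A}_i\|_2=1$---the only cosmetic difference being that the paper phrases the verification as $\alpha\leq K$ rather than your equivalent scalar inequality $d^2(K-k+2)<1+d^4$.
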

\begin{proof}
See Appendix~\ref{ss:mainleft}.
\end{proof}

With the foregoing proposition, we immediately obtain that under~\eqref{eq:o} and~\eqref{eq:unitnorm},
  \begin{equation}
    \label{e:l2cond} \max_{i \in T \backslash T^k} \frac{|\langle \mathbf{A}_i, \mathbf{r}^{k} \rangle|}{\| \mathbf{P}^{\bot}_{T^{k}} \mathbf{A}_i \|_2} > \max_{i \in \{1, \cdots, n\} \backslash T} \frac{|\langle \mathbf{A}_i, \mathbf{r}^{k} \rangle|}{\| \mathbf{P}^{\bot}_{T^{k}} \mathbf{A}_i \|_2}, \nonumber
  \end{equation}
which together with Proposition~\ref{prop:p1} implies that
\begin{equation}
t^{k + 1} = \underset{i \in \{1, \cdots, n\}}{\arg \max}  \frac{| \langle \mathbf{A}_i, \mathbf{r}^{k}
    \rangle |}{\| \mathbf{P}^{\bot}_{T^{k}} \mathbf{A}_i \|_2} \in T \backslash T^k. \nonumber
\end{equation}
In other words, the OLS algorithm selects a correct index
  in the ($k + 1$)-th iteration. This establishes the theorem.

\begin{remark}
\label{r:OMPOLS}
Propositions~\ref{p:mainleft} and~\ref{p:mainright} imply that when $k = 0$,
 \begin{equation}
    \label{e:mainleft2}
    \max_{i \in T} |\langle \mathbf{A}_i, \mathbf{y} \rangle| \geq
     \frac{\| \mathbf{y} \|_2^2}{\sqrt{K} \| \mathbf{x} \|_2} > \max_{i \in \{1, \cdots, n\} \backslash T} |\langle \mathbf{A}_i, \mathbf{y} \rangle|
\end{equation}
holds under~\eqref{eq:o}. Thus a correct selection is ensured at the first iteration of OLS. Since OMP coincides with OLS for the first iteration,~the RIP condition naturally applies to the OMP algorithm. Furthermore, by the recursive-proof argument of OMP in Section~\ref{sec:distinction}, it also guarantees correct selections in the subsequent iterations. {\color{black}{Therefore, the condition~\eqref{eq:o} in Theorem~\ref{thm:1} is also sufficent for the OMP recovery, which matches the recent work of Mo~\cite{mo2015sharp}.}}
\end{remark}


\subsection{Proof of Theorem~\ref{thm:nece}}\label{sec:example}

Note that if a wrong selection is made at the first iteration of OLS, then recovering the support of $\mathbf{x}$ in $K$ iterations is impossible.
Thus, to prove Theorem~\ref{thm:nece}, it suffices to show that there exist a sparse signal $\mathbf{x}$ with sparsity $K$
and a sampling matrix $\mathbf{A}$ satisfying \eqref{eq:unitnorm} and
$\label{eq:on55} \delta_{K + 1} = \frac{1}{\sqrt{K + \frac{1}{4}}},$
  for which OLS is unable to identify a support index of $\mathbf{x}$ in the first iteration. {\color{black}{One such example is given as follows (see Appendix~\ref{app:example} for details).

\begin{example}
\label{ex:countex}
  Consider $K = 2$ and let
\begin{equation}
      \mathbf{x} = \left[ \begin{array}{c}
       \hspace{-.5mm} 0 \hspace{-.5mm}\\
       \hspace{-.5mm} - 1 \hspace{-.5mm}\\
       \hspace{-.5mm} 1 \hspace{-.5mm}
    \end{array} \right] ~ \text{and} ~ ~
    \mathbf{A} =
    \left[ \begin{array}{ccc}
     1           & \frac{1}{3}                         & -\frac{1}{3} \\
     0 & \frac{2 \sqrt{2}}{3}      & \frac{\sqrt{2}}{3} \\
     0 &   0    & \frac{\sqrt{6}}{3}
    \end{array} \right]. \label{eq:AAAAA}
  \end{equation}
Then, $ \mathbf{A}$ satisfies \eqref{eq:unitnorm}, and moreover, we have
  \begin{equation}
  \label{eq:phi}
    \mathbf{y} = \mathbf{A x} = \left[ \begin{array}{c}
       \hspace{-.5mm} -\frac{2}{3} \hspace{-.5mm}\\
       \hspace{-.5mm} - \frac{\sqrt{2}}{3} \hspace{-.5mm}\\
       \hspace{-.5mm} \frac{\sqrt{6}}{3} \hspace{-.5mm}
    \end{array} \right] ~ \text{and} ~ ~ \mathbf{A}'\mathbf{A} =
    \left[ \begin{array}{ccc}
      1            & \frac{1}{3}   & - \frac{1}{3} \\
      \frac{1}{3}  &      1        & \frac{1}{3} \\
     - \frac{1}{3}  &  \frac{1}{3} & 1
    \end{array} \right]. \nonumber
  \end{equation}

    One can verify that the eigenvalues of matrix $\mathbf{A}'\mathbf{A}$ are $\lambda_1 = \lambda_2 = \frac{4}{3}~\text{and}~\lambda_3 = \frac{1}{3}.$
%
Then, by definition of the RIP and \cite[Remark 1]{dai2009subspace}, we have
  \begin{eqnarray}
   \delta_{K + 1} =  \max \{ \lambda_1 - 1, \lambda_2 - 1, 1 - \lambda_3 \} =   \frac{2}{3} = \frac{1}{\sqrt{K + \frac{1}{4}}}\bigg|_{K = 2}. \nonumber
  \end{eqnarray}

However, noting that $T^0 = \emptyset$ and 
\begin{equation}
  \frac{\left| \langle \mathbf{A}_1, \mathbf{y} \rangle \right|}{\| \mathbf{P}^{\bot}_{T^0} \mathbf{A}_1 \|_2} = \frac{\left| \langle \mathbf{A}_2, \mathbf{y} \rangle \right|}{\| \mathbf{P}^{\bot}_{T^0} \mathbf{A}_2 \|_2}  = \frac{\left| \langle \mathbf{A}_3, \mathbf{y} \rangle \right|}{\| \mathbf{P}^{\bot}_{T^0} \mathbf{A}_3 \|_2}  = \frac{2}{3}, \nonumber
  \end{equation}
by Table \ref{a:OLS}, the OLS algorithm  will choose a wrong index $t^1 =1$ in the first iteration.
As a result, OLS will fail to exactly recover the support of $ \mathbf{x}$ in $K$ ($= 2$) iterations.

  \end{example}
}}

\section{Discussions} \label{sec:dis}

Thus far, we have presented a nearly optimal sufficient condition that ensures the OLS algorithm
to exactly recover the support of sparse signals.
In this section, we discuss some issues that arise from our analysis.

Firstly, like in~\cite{foucart2013stability,herzet2012exact,wang2014recovery, chang2014improved},
our analysis relies on the assumption of unit $\ell_2$-norm columns for the sampling matrix $\mathbf{A}$.
In many practical scenarios, however, the columns of sampling matrices may not have unit $\ell_2$-norm
(e.g., {\color{black}{Gaussian random matrices~\cite{candes2005decoding}}}).
This naturally raises the question of whether similar bounds of OLS can be obtained for general matrices.
The following theorem aims to answer this question. Specifically, it characterizes the relationship between isometry constants for a general matrix
and its normalized counterpart, which together with Theorem~\ref{thm:1}
leads to a recovery bound of OLS for general matrices.

\begin{theorem}
  \label{t:transferT} 
Let $\hat{\mathbf{A}}$ be an $m \times n$ matrix satisfying the RIP of order $K$ with isometry constant $\delta_K(\hat{\mathbf{A}})$ and let $\mathbf{A}$ be the column-normalized form of $\hat{\mathbf{A}}$, i.e.,
$\mathbf{A} = \hat{\mathbf{A}} \mathbf{D}$, where $\mathbf{D}\in \mathbb{R}^{n\times n}$ is a diagonal matrix with
$d_{ii}=\frac{1}{\|\hat{\mathbf{A}}_i\|_2},~i = 1, \cdots, n.$
Then, $\mathbf{A}$ satisfies the RIP of order $K$ with $\delta_K(\mathbf{A}) \leq \gamma$,
\begin{eqnarray}
\label{e:gamma}
\text{where}~&& \hspace{-7mm} {\color{black}{ \gamma = \max   \left\{  \left[1   +   \delta_K(\hat{\mathbf{A}}) \right]   \max_{1\leq i\leq n}   d_{ii}   -   1, \right.}}\nonumber \\
&&   
~~~~~~~~~~~~~~~{\color{black}{ \left. 1   -   \left[1   -   \delta_K(\hat{\mathbf{A}}) \right]   \min_{1\leq i\leq n}   d_{ii}   \right\}. }}~~~~~~~
\end{eqnarray}
\end{theorem}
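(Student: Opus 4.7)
The plan is to reduce the RIP of $\mathbf{A}$ to that of $\hat{\mathbf{A}}$ by exploiting the diagonal factorization $\mathbf{A}=\hat{\mathbf{A}}\mathbf{D}$. First I would fix an arbitrary $K$-sparse $\mathbf{x}\in\mathbb{R}^n$ and introduce the auxiliary vector $\mathbf{u}=\mathbf{D}\mathbf{x}$. Because $\mathbf{D}$ is diagonal with strictly positive entries, $\mathbf{u}$ inherits the support of $\mathbf{x}$ and is therefore also $K$-sparse; moreover, the key algebraic identity $\mathbf{A}\mathbf{x}=\hat{\mathbf{A}}\mathbf{D}\mathbf{x}=\hat{\mathbf{A}}\mathbf{u}$ converts the norm $\|\mathbf{A}\mathbf{x}\|_2^2$ into an expression governed by the RIP of $\hat{\mathbf{A}}$.

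Next I would chain two sandwich inequalities. The RIP of $\hat{\mathbf{A}}$ at order $K$, applied to the $K$-sparse vector $\mathbf{u}$, yields
\[
(1-\delta_K(\hat{\mathbf{A}}))\|\mathbf{u}\|_2^2 \le \|\mathbf{A}\mathbf{x}\|_2^2 \le (1+\delta_K(\hat{\mathbf{A}}))\|\mathbf{u}\|_2^2,
\]
while the elementary componentwise estimate $\|\mathbf{u}\|_2^2=\sum_i d_{ii}^2 x_i^2$ provides
\[
\bigl(\min_{1\le i\le n} d_{ii}\bigr)^2 \|\mathbf{x}\|_2^2 \le \|\mathbf{u}\|_2^2 \le \bigl(\max_{1\le i\le n} d_{ii}\bigr)^2 \|\mathbf{x}\|_2^2.
\]
Composing these two-sided bounds relates $\|\mathbf{A}\mathbf{x}\|_2^2$ to $\|\mathbf{x}\|_2^2$, and reading off the larger of the two deviations from $1$ gives the claimed upper bound $\delta_K(\mathbf{A})\le\gamma$.

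The argument is essentially elementary bookkeeping, so there is no substantial mathematical obstacle. The only point worth a short comment is that the $\max$ and $\min$ in $\gamma$ range over \emph{all} coordinates $i=1,\ldots,n$ rather than only over the unknown support of $\mathbf{x}$, which is what makes $\gamma$ a uniform matrix-level bound valid for every $K$-sparse $\mathbf{x}$. As a preliminary sanity check, specializing the RIP of $\hat{\mathbf{A}}$ to each canonical basis vector $\mathbf{e}_i$ confirms that $1-\delta_K(\hat{\mathbf{A}})\le\|\hat{\mathbf{A}}_i\|_2^2\le 1+\delta_K(\hat{\mathbf{A}})$, so each $d_{ii}=1/\|\hat{\mathbf{A}}_i\|_2$ is well defined and bounded away from both $0$ and $\infty$.
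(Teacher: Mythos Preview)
Your proposal is correct and mirrors the paper's proof essentially line for line: fix a $K$-sparse $\mathbf{x}$, observe that $\mathbf{D}\mathbf{x}$ is $K$-sparse, apply the RIP of $\hat{\mathbf{A}}$ to $\mathbf{D}\mathbf{x}$, and sandwich $\|\mathbf{D}\mathbf{x}\|_2^2$ between $(\min_i d_{ii})^2\|\mathbf{x}\|_2^2$ and $(\max_i d_{ii})^2\|\mathbf{x}\|_2^2$. The only cosmetic difference is that you name the intermediate vector $\mathbf{u}$ and add the (useful) remark that the global $\max/\min$ over all $i$ is what makes $\gamma$ uniform in $\mathbf{x}$.
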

\begin{proof}
See Appendix~\ref{ss:relat}.
\end{proof}

\begin{remark}
{\color{black}{
\label{r:RIC}
Theorem~\ref{t:transferT} offers an upper bound on $\delta_K(\mathbf{A})$ in terms of $d_{ii}$ and $\delta_K(\hat{\mathbf{A}})$. To see the tightness of this bound, we take the following example. If $d_{ii}=1$ for $i=1, \cdots, n$, then we have $\mathbf{A}=\hat{\mathbf{A}}$ and consequently  $\delta_K(\mathbf{A})=\delta_K(\hat{\mathbf{A}})$, which means that the equality is attainable. }}

%
\end{remark}


{\color{black}{
From \eqref{e:gamma}, one can see that the upper bound of $\delta_K(\mathbf{A})$  depends on both
$\max\limits_{1\leq i\leq n}d_{ii}$ and $\min\limits_{1\leq i\leq n}d_{ii}$. To see an explicit connection between $\delta_K(\mathbf{A})$ and $\delta_K(\hat{\mathbf{A}})$, we provide the following corollary.}}

\begin{corollary}
  \label{c:transferT}
Under the assumption of Theorem \ref{t:transferT}, matrix $\mathbf{A}$ satisfies the RIP of order $K$ with isometry constant
  \begin{equation}
    \label{e:transferT2}
\delta_K(\mathbf{A}) \leq \frac{2\delta_K(\hat{\mathbf{A}})}{1-\delta_K(\hat{\mathbf{A}})}.
  \end{equation}
\end{corollary}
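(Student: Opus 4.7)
The plan is to apply Theorem~\ref{t:transferT} directly, using bounds on the entries of the normalizing matrix $\mathbf{D}$ that follow from the RIP of $\hat{\mathbf{A}}$, and then to simplify the two arguments of the maximum appearing in~\eqref{e:gamma}.

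First I would pin down the range of $d_{ii} = 1/\|\hat{\mathbf{A}}_i\|_2$. Since $\hat{\mathbf{A}}$ satisfies the RIP of order $K\geq 1$, applying \eqref{eq:RIP} to the $1$-sparse standard basis vector $\mathbf{e}_i$ gives $1-\delta_K(\hat{\mathbf{A}}) \leq \|\hat{\mathbf{A}}_i\|_2^2 \leq 1+\delta_K(\hat{\mathbf{A}})$. Writing $\delta = \delta_K(\hat{\mathbf{A}})$ for brevity, this yields the two-sided bound
$$\frac{1}{\sqrt{1+\delta}} \;\leq\; d_{ii} \;\leq\; \frac{1}{\sqrt{1-\delta}},$$
so in particular $\max_i d_{ii} \leq 1/\sqrt{1-\delta}$ and $\min_i d_{ii} \geq 1/\sqrt{1+\delta}$.

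Next I would substitute these estimates into the expression for $\gamma$ given in Theorem~\ref{t:transferT}, which reduces the problem to establishing the scalar inequality
$$\max\!\left\{\frac{1+\delta}{\sqrt{1-\delta}}-1,\; 1-\frac{1-\delta}{\sqrt{1+\delta}}\right\} \;\leq\; \frac{2\delta}{1-\delta}.$$
For the first argument I would invoke the elementary fact $\sqrt{1-\delta}\geq 1-\delta$ (valid because $1-\delta\in(0,1]$) to get $\frac{1+\delta}{\sqrt{1-\delta}}-1 \leq \frac{1+\delta}{1-\delta}-1 = \frac{2\delta}{1-\delta}$; note that this is the term that governs the final constant and explains why no improvement of \eqref{e:transferT2} is obtained this way. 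For the second argument I would use $\sqrt{1+\delta}\leq 1+\delta$ to deduce $1-\frac{1-\delta}{\sqrt{1+\delta}} \leq 1-\frac{1-\delta}{1+\delta} = \frac{2\delta}{1+\delta} \leq \frac{2\delta}{1-\delta}$, since $1-\delta\leq 1+\delta$.

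Combining these two term-by-term estimates gives $\gamma\leq \frac{2\delta}{1-\delta}$, and Theorem~\ref{t:transferT} then yields $\delta_K(\mathbf{A}) \leq \gamma \leq \frac{2\delta_K(\hat{\mathbf{A}})}{1-\delta_K(\hat{\mathbf{A}})}$, which is precisely~\eqref{e:transferT2}. I do not anticipate any real obstacle here: the argument is essentially bookkeeping once Theorem~\ref{t:transferT} and the one-line RIP bound on $\|\hat{\mathbf{A}}_i\|_2$ are in hand; the only mild subtlety is recognizing that the two scalar inequalities above can both be closed by the same elementary monotonicity trick ($\sqrt{1\pm\delta}$ vs.\ $1\pm\delta$).
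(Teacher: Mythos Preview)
Your proposal is correct and follows essentially the same route as the paper: bound each $d_{ii}$ via the RIP applied to the $1$-sparse vectors $\mathbf{e}_i$, substitute into the expression for $\gamma$ from Theorem~\ref{t:transferT}, and simplify. The only cosmetic difference is that you carry the square roots $1/\sqrt{1\pm\delta}$ and then invoke $\sqrt{1-\delta}\geq 1-\delta$ and $\sqrt{1+\delta}\leq 1+\delta$, whereas the paper passes directly to $d_{ii}\leq 1/(1-\delta)$ and $d_{ii}\geq 1/(1+\delta)$ (equivalently, working with the bound on $d_{ii}^2$); these amount to the same estimate.
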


\begin{proof}
See Appendix~\ref{app:cc}.
\end{proof}

By applying Corollary~\ref{c:transferT} to Theorem~\ref{thm:1}, Corollary~\ref{thm:general} below is immediate.
We remark that Theorem~\ref{t:transferT} may also be useful for analyzing RIP conditions of other sparse recovery algorithms.
To be specific, if one wishes to derive a condition based on general matrices (whose columns are not necessarily normalized to be unitary),
one may, alternatively, build a condition by assuming normalized columns for analytical convenience,
and then transfer the condition to the general case by using Theorem~\ref{t:transferT}.

\begin{corollary}
  \label{thm:general} Let $\mathbf{x} \in \mathbb{R}^n$ be a $K$-sparse
  vector and $\hat{\mathbf{A}} \in \mathbb{R}^{m \times n}$ be the sampling matrix satisfying the RIP with
\begin{equation}
  \delta_{K + 1}(\hat{\mathbf{A}}) < \frac{1}{2\sqrt{K + 1}+1}.  \label{eq:general}
  \end{equation}
Then, OLS exactly recovers $\mathbf{x}$ from $\mathbf{y} = \hat{\mathbf{A}} \mathbf{x}$ in $K$ steps.
\end{corollary}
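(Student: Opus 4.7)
The plan is to chain together three facts already established in the excerpt: (i) the invariance of the OLS iterates under column normalization of the sampling matrix (noted in Section~\ref{sec:observation}), (ii) the transfer bound in Corollary~\ref{c:transferT}, and (iii) the unit-norm recovery guarantee in Theorem~\ref{thm:1}. Concretely, I would first set $\mathbf{A} = \hat{\mathbf{A}} \mathbf{D}$ where $\mathbf{D}$ is the positive diagonal scaling whose entries are $d_{ii} = 1/\|\hat{\mathbf{A}}_i\|_2$, so that $\mathbf{A}$ has unit $\ell_2$-norm columns. Writing $\tilde{\mathbf{x}} = \mathbf{D}^{-1} \mathbf{x}$, we have $\mathbf{y} = \hat{\mathbf{A}} \mathbf{x} = \mathbf{A} \tilde{\mathbf{x}}$, and $\tilde{\mathbf{x}}$ is $K$-sparse with the same support as $\mathbf{x}$.

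Next I would invoke the observation from Section~\ref{sec:observation}, which states that normalizing the columns of the sampling matrix does not affect the OLS selection rule (since the ratios $\mathbf{P}^\bot_{T^k}\mathbf{A}_i / \|\mathbf{P}^\bot_{T^k}\mathbf{A}_i\|_2$ are unchanged). Consequently, running OLS on the pair $(\hat{\mathbf{A}}, \mathbf{y})$ produces exactly the same sequence of support estimates $T^1, T^2, \ldots, T^K$ as running OLS on $(\mathbf{A}, \mathbf{y})$, and the final least-squares step over a correctly identified support reconstructs $\mathbf{x}$ exactly. Thus it suffices to show that OLS applied to $(\mathbf{A}, \mathbf{y})$ recovers the support of $\tilde{\mathbf{x}}$ in $K$ iterations, for which Theorem~\ref{thm:1} is the natural tool.

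The remaining task is to verify that the hypothesis $\delta_{K+1}(\hat{\mathbf{A}}) < \frac{1}{2\sqrt{K+1}+1}$ forces $\delta_{K+1}(\mathbf{A}) < \frac{1}{\sqrt{K+1}}$, which is the condition required by Theorem~\ref{thm:1}. Applying Corollary~\ref{c:transferT} with order $K+1$ gives
\begin{equation}
\delta_{K+1}(\mathbf{A}) \leq \frac{2\delta_{K+1}(\hat{\mathbf{A}})}{1 - \delta_{K+1}(\hat{\mathbf{A}})}. \nonumber
\end{equation}
A short algebraic manipulation shows that $\frac{2\delta}{1-\delta} < \frac{1}{\sqrt{K+1}}$ is equivalent to $\delta(2\sqrt{K+1} + 1) < 1$, i.e., $\delta < \frac{1}{2\sqrt{K+1}+1}$, which is precisely the assumed bound on $\delta_{K+1}(\hat{\mathbf{A}})$. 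Hence $\delta_{K+1}(\mathbf{A}) < 1/\sqrt{K+1}$ and Theorem~\ref{thm:1} applies, finishing the proof.

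Since every ingredient is already in place, there is no substantive obstacle; the argument is essentially a two-line chain. The only thing to be careful about is the bookkeeping around the normalization: making sure to state that Theorem~\ref{thm:1}'s unit-norm assumption is satisfied by $\mathbf{A}$, that the target vector $\tilde{\mathbf{x}}$ is still $K$-sparse with the correct support, and that the OLS trajectories on $(\hat{\mathbf{A}}, \mathbf{y})$ and $(\mathbf{A}, \mathbf{y})$ coincide so that support recovery for one yields support recovery for the other.
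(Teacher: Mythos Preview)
Your proposal is correct and follows exactly the approach the paper intends: the paper states that Corollary~\ref{thm:general} is ``immediate'' from applying Corollary~\ref{c:transferT} to Theorem~\ref{thm:1}, and you have filled in precisely those details, including the OLS invariance under column normalization and the algebraic equivalence $\frac{2\delta}{1-\delta} < \frac{1}{\sqrt{K+1}} \Leftrightarrow \delta < \frac{1}{2\sqrt{K+1}+1}$.
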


Secondly, although in this paper we are primarily interested in recovering $K$-sparse signals, our analysis can be possibly extended to the situations where the signals of interest have specific properties, such as having non-negative or exponentially decaying nonzero entries,  in addition to the sparsity nature.
In fact, recovering non-negative sparse signals arises in many application domains, where the signals have physical interpretations; see, e.g.,~\cite{zass2006nonnegative} and {\color{black}{the}} references therein.
%
%
We expect that the recovery condition for OLS can be improved if those specific properties are incorporated in the analysis. 
%
%
%
%
%
{\color{black}{Moreover, we would like to mention that, by following some techniques developed in~\cite{CaiW11, chang2014improved,determe2016exact,wang2014recovery,wen2015sharp}, our analysis may also be extended to the noisy case, in which rather than~\eqref{e:model}, we observe 
$
\mathbf{y} = \mathbf{A} \mathbf{x}+\mathbf{v}.
$
Here, $\mathbf{v}$ is the noise vector commonly assumed to be $\ell_2$-bounded (i.e., $\|\mathbf{v}\|_2\leq \epsilon$ for some constant $\epsilon$ \cite{DonET06}), $\ell_{\infty}$-bounded (i.e., $\|\mathbf{A}\mathbf{v}\|_{\infty}\leq \epsilon$ for some constant $\epsilon$~\cite{CaiW11}), or Gaussian  (i.e., $v_i\sim \mathcal{N}(0, \sigma^2)$~\cite{CanT07}).}}
%
By studying the behavior of OLS in those scenarios, one may gain a better insight of the recovery ability of this algorithm.


{\color{black}{
%
%
%

%
Thirdly, in Theorem~\ref{thm:1} and \ref{thm:nece}, we have, respectively, established a sufficient condition and a necessary condition for exact recovery of sparse signals via OLS (i.e., $\delta_{K + 1} < \frac{1}{\sqrt{K + 1}}$ and $\delta_{K + 1} < \frac{1}{\sqrt{K + \frac{1}{4}}}$).  One may notice that there remains a small gap between these two conditions. To bridge this gap, it may require {\color{black}{a refined analysis based on the counterexample}} in Section~\ref{sec:example}, Specifically, as detailed in Appendix~\ref{app:example}, our counterexample considers the $3$-dimensional case, in which we have only three variables (i.e., $a$, $b$ and $c$ in \eqref{eq:hi}) to tune. For the higher dimensional cases, there will be more unknowns to be optimized and thus better condition can be expected. In these cases, however, the problem will also become more complex (than that in Appendix~\ref{app:example}). Whether it is possible to close the gap and get a sharp condition for OLS in the higher dimensional cases is an interesting open question. 

%
%
%
%
%
%
}}


Finally, we would like to mention that while Theorem~\ref{thm:1} demonstrates a near-optimal condition for OLS when it iterates $K$ times,
there is still significant room for improving the result if OLS is allowed to perform more than $K$ iterations.
In fact, it has been shown that if OLS runs $12K$ iterations,
exact recovery is guaranteed when the isometry constant is an absolute constant independent of $K$~\cite{foucart2013stability}.
This offers  many benefits in the sampling complexity. For example, for Gaussian random sampling matrices,
it has been shown that the number $m$ of samples scales inversely to the square of isometry constants with probability exponentially close to one~\cite{candes2005decoding};
thus, an improved isometry constant directly {\color{black}{leads to a reduction of the sampling complexity}}.
However, it should be noted that executing more iterations is also associated with higher computational cost.
Meanwhile, selection of too many incorrect indices could significantly degrade the reconstruction performance, particularly when noise is present~\cite{ding2013perturbation}.
Therefore, finding an appropriate trade-off between the computational cost and sampling complexity for OLS can be of vital importance, 
and our future work will be directed towards investigating this issue.

\appendices
\numberwithin{equation}{section}
\newcounter{mytempthcnt}
\setcounter{mytempthcnt}{\value{theorem}}
\setcounter{theorem}{2}

\section{Proof of Lemma~\ref{l:projld}}\label{ss:projld}

\begin{proof}
  We first consider the case that $S = \emptyset$. In this case, we have $\mathbf{P}^{\bot}_S = \mathbf{I}$ and hence
\[
  \| \mathbf{P}^{\bot}_S \mathbf{A}_i \|_2 = \| \mathbf{A}_i \|_2 \overset{\eqref{eq:unitnorm}}{=} 1 \geq
     \sqrt{1 - \delta_{|S| + 1}^2}.
\]

Next, we consider the case that $S \neq \emptyset$. Observe that
$
\mathbf{A}_i = \mathbf{P}_S \mathbf{A}_i + \mathbf{P}^{\bot}_S \mathbf{A}_i
$
  Let $\theta \in (0, \frac{\pi}{2}]$ denote the angle between $\mathbf{A}_i$
  and $\mathbf{P}_S \mathbf{A}_i$. A geometric illustration of $\theta$ and $\mathbf{P}^{\bot}_S \mathbf{A}_i$ is given in Figure~\ref{fig:projection}. Then by~\eqref{eq:unitnorm}, we reach
  \begin{equation}
    \label{e:projeq} \| \mathbf{P}^{\bot}_S \mathbf{A}_i \|_2 = \|
    \mathbf{A}_i \|_2 \sin \theta \overset{\eqref{eq:unitnorm}}{=} \sin \theta.
  \end{equation}
   From the definition of $\mathbf{P}_S$, there exists a vector $\mathbf{z} \in
  \mathbb{R}^n$, which is supported on $S$, such that
  $
  \mathbf{P}_S \mathbf{A}_i = \mathbf{A}_S \mathbf{z}_S = \mathbf{Az}.
  $
%
%


Note that $i \notin S$ and $\mathbf{A}_i = \mathbf{A} \mathbf{e}_i$ where  $\mathbf{e}_i$ denotes the $i$-th column of the $n\times n$ matrix.
Applying \cite[Lemma 2.1]{chang2014improved}, which is an improvement of Cand{\`e}s and Tao's result~\cite{candes2005decoding} recently obtained by Chang and Wu, with $\mathbf{u} = \mathbf{e}_i$ and $\mathbf{v} = \mathbf{z}$
yields
%
  $
  \cos \theta \leq \delta_{\left|\text{supp}(\mathbf{z}) \hspace{.5mm} \cup \hspace{.5mm} \text{supp}(\mathbf{e}_i) \right|} = \delta_{|S| + 1},
  $
  and hence
    $
  \sin \theta = \sqrt{1 - \cos^2 \theta} \geq \sqrt{1 - \delta_{|S| + 1}^2},
  $
which together with~\eqref{e:projeq} concludes this lemma.
\end{proof}


\section{Proof of Lemma~\ref{l:mainlemma}}\label{ss:mainlemma}


\begin{proof}

{\color{black}{Our proof is similar to that of \cite[Lemma~1]{wen2015sharp}.}}
We first give some definitions.
%
%
For any given $\alpha > 0$, we define
$
\beta :=\frac{{\color{black}{1 - \sqrt{\alpha + 1}}}}{\sqrt{\alpha }}.
$
By some basic calculations, we have
  \begin{eqnarray}
  \label{e:alphaproperty}
 \frac{2 \beta }{1 - \beta ^2} = -\sqrt{\alpha }~~\text{and}~~ \label{e:alphaproperty2}
 \frac{1 + \beta ^2}{1 - \beta ^2} = \sqrt{\alpha + 1}.
 \end{eqnarray}
 Also, we define
  \begin{eqnarray}
    \label{e:B}
    \mathbf{\Phi} &\hspace{-2mm} :=&\frac{\mathbf{P}^{\bot}_{T^k}}{\sqrt{1 - \beta^4}}
    \left[ \begin{array}{c c}
    \mathbf{A}_{T \backslash {T^k}}&\mathbf{A}_{j}
    \end{array} \right],  \\
    \label{eq:u}
    \mathbf{u} \hspace{.25mm} &\hspace{-2mm} :=&\hspace{-2mm} 
    \left[ \begin{array}{c}
      \mathbf{x}_{T \backslash T^k} \\
      0
    \end{array} \right],
    \label{eq:vvector}
    \mathbf{v} = \left[ \begin{array}{c}
      \mathbf{0}_{|T \backslash T^k| \times 1}
      \\
     \lambda \beta \| \mathbf{x}_{T \backslash T^k} \|_2
    \end{array} \right],~~~~~~~~~~~
\\
  \label{e:t}
\text{where}~~\lambda \hspace{.25mm} &\hspace{-2mm} :=&\hspace{-2mm}
\begin{cases}
~1                        & \text{if}~ \langle \mathbf{A}_{j}, \mathbf{r}^{k} \rangle \geq 0, \\
-1                        & \text{if}~ \langle \mathbf{A}_{j}, \mathbf{r}^{k} \rangle < 0.
\end{cases}
\end{eqnarray}

Then, we observe that
  \begin{eqnarray}
    \lefteqn{\hspace{-5mm}\| \mathbf{\Phi} (\mathbf{u} + \mathbf{v})\|_2^2 - \| \mathbf{\Phi}
    (\beta ^2 \mathbf{u} - \mathbf{v})\|_2^2~~~~}   \nonumber\\
    & = & \hspace{-2mm} \| \mathbf{\Phi u} \|_2^2 + \| \mathbf{\Phi v} \|_2^2 + 2  \langle \mathbf{\Phi  u},  \mathbf{\Phi v} \rangle \nonumber \\
    & - & \hspace{-2mm}  \left( \| \beta ^2 \mathbf{\Phi u} \|_2^2 + \| \mathbf{\Phi  v} \|_2^2 - 2  \langle \beta ^2 \mathbf{\Phi  u},  \mathbf{\Phi  v} \rangle \right) \nonumber \\
    & = & \hspace{-2mm} (1-\beta^4)\| \mathbf{\Phi u} \|_2^2+2(1+\beta^2) \langle\mathbf{\Phi  u},  \mathbf{\Phi  v}  \rangle.
    \label{eq:cac}
      \end{eqnarray}
     %

%

\begin{figure}[t]
\centering
{\includegraphics[width = 76 mm]{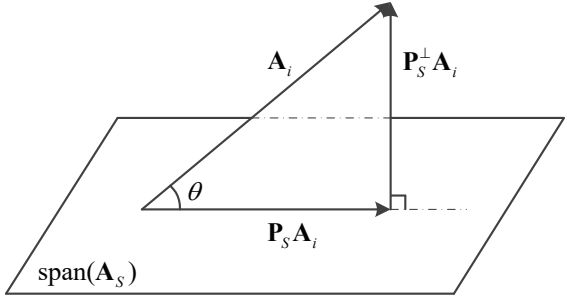}}
\caption{Geometric illustration of $\theta$ and $\mathbf{P}^{\bot}_{S}
    \mathbf{A}_{i}$.} \vspace{-5mm}\label{fig:projection}
\end{figure}

From \eqref{e:B}--\eqref{eq:vvector}, it is straightforward to show that
\begin{eqnarray}
  \label{e:AB}
    \mathbf{\Phi \mathbf{u}} &\hspace{-2mm} =&\hspace{-2mm} \frac{\mathbf{P}^{\bot}_{T^k} \mathbf{A}_{T \backslash T^k} \mathbf{x}_{T \backslash T^k}}{\sqrt{1 - \beta^4}} \overset{(a)}{=} \frac{\mathbf{r}^k}{\sqrt{1 - \beta^4}} \\
    \mathbf{\Phi \mathbf{v}} &\hspace{-2mm} =&\hspace{-2mm} \frac{\left( \lambda \beta \| \mathbf{x}_{T \backslash T^k} \|_2 \right) \mathbf{P}^{\bot}_{T^k} \mathbf{A}_{j} }{\sqrt{1 - \beta^4}},
\end{eqnarray}
where (a) is because
\begin{eqnarray}
\mathbf{r}^k &\hspace{-2mm} \overset{\eqref{eq:rkex}}{=}&\hspace{-2mm} \mathbf{P}^{\bot}_{T^{k}} \mathbf{y} = \mathbf{P}^{\bot}_{T^{k}} \mathbf{A x} = \mathbf{P}^{\bot}_{T^{k}} \left( \mathbf{A}_{T^k} \mathbf{x}_{T^k} + \mathbf{A}_{T \backslash T^k} \mathbf{x}_{T \backslash T^k} \right)  \nonumber \\
&\hspace{-2mm} =&\hspace{-2mm} \mathbf{P}^{\bot}_{T^{k}} \mathbf{A}_{T \backslash T^k} \mathbf{x}_{T \backslash T^k}. \label{eq:rk2}
\end{eqnarray}

Hence,
       \begin{eqnarray}
    \langle \mathbf{\Phi  u},  \mathbf{\Phi  v} \rangle
    &\hspace{-3mm} {=} &\hspace{-3mm} \frac{\left( \lambda \beta \| \mathbf{x}_{T \backslash T^k} \|_2 \right)
    \mathbf{A}'_{j} \left(\mathbf{P}^{\bot}_{T^k} \right)' \mathbf{P}^{\bot}_{T^k}
    \mathbf{A}_{T \backslash T^k} \mathbf{x}_{T \backslash T^k}}{1 - \beta^4} \nonumber \\
    &\hspace{-3mm} \overset{\eqref{eq:pbot}}{=} & \hspace{-3mm} \frac{\left( \lambda \beta \| \mathbf{x}_{T \backslash T^k} \|_2 \right) \mathbf{A}'_{j}
    \mathbf{P}^{\bot}_{T^k} \mathbf{A}_{T \backslash T^k} \mathbf{x}_{T \backslash T^k}}{1 - \beta^4} \nonumber\\
    & \hspace{-3mm} \overset{\eqref{eq:rk2}}{=} & \hspace{-3mm} \frac{\left( \lambda \beta  \| \mathbf{x}_{T \backslash T^k} \|_2 \right) \langle \mathbf{A}_{j}, \mathbf{r}^{k} \rangle}{1 - \beta^4} \nonumber\\
    & \hspace{-3mm} \overset{\eqref{e:t}}{=} & \hspace{-3mm} \frac{\beta \| \mathbf{x}_{T \backslash T^k} \|_2 \left| \langle \mathbf{A}_{j}, \mathbf{r}^{k} \rangle \right|}{1 - \beta^4}. \label{eq:yeor}
  \end{eqnarray}
  Using \eqref{e:AB}--\eqref{eq:yeor}, we can rewrite \eqref{eq:cac} as
    \begin{eqnarray}
\lefteqn{\hspace{-10mm}\| \mathbf{\Phi} (\mathbf{u} + \mathbf{v})\|_2^2 - \| \mathbf{\Phi} (\beta ^2 \mathbf{u} - \mathbf{v})\|_2^2} \nonumber \\
&\hspace{-3mm}  \overset{\eqref{e:AB}}{=}& \hspace{-3mm} \| \mathbf{r}^k \|_2^2
+2(1+\beta^2)\langle\mathbf{\Phi  u},  \mathbf{\Phi  v}  \rangle \nonumber \\
&\hspace{-3mm}  \overset{\eqref{eq:yeor}}{=}& \hspace{-3mm} \| \mathbf{r}^k \|_2^2
+\frac{2\beta \| \mathbf{x}_{T \backslash T^k} \|_2 \left| \langle \mathbf{A}_{j}, \mathbf{r}^{k} \rangle \right|}{1 - \beta^2} \nonumber \\
&\hspace{-3mm} \overset{\eqref{e:alphaproperty}}{=}&\hspace{-3mm}   \| \mathbf{r}^k \|_2^2
-\sqrt{\alpha}~\| \mathbf{x}_{T \backslash T^k} \|_2 \left| \langle \mathbf{A}_{j}, \mathbf{r}^{k} \rangle \right|.\label{eq:jijia}
\end{eqnarray}

 On the other hand, since both $\mathbf{u} + \mathbf{v}$ and $\beta ^2 \mathbf{u} - \mathbf{v}$ are $(|T\backslash T^k| + 1)$-sparse, and also noting that $|T \backslash T^k| = K - k$, it follows from \eqref{eq:cac} that
  \begin{eqnarray}
    \lefteqn{\hspace{-1.25mm} \| \mathbf{\Phi} (\mathbf{u} + \mathbf{v})\|_2^2 - \| \mathbf{\Phi}
    (\beta ^2 \mathbf{u} - \mathbf{v})\|_2^2} &  &  \nonumber\\
    &\hspace{-5mm} \overset{\text{Lemma}~\ref{l:orthogonalcomp}}{\geq} &\hspace{-3mm}  \frac{(1 \hspace{-.5mm} - \hspace{-.25mm} \delta_{K - k + 1}) \hspace{-.25mm} \|\mathbf{u} \hspace{-.25mm} + \hspace{-.25mm}
    \mathbf{v}\|_2^2}{1 - \beta^4} \hspace{-.5mm} - \hspace{-.5mm} \frac{(1 \hspace{-.5mm} + \hspace{-.25mm} \delta_{K - k + 1}) \hspace{-.25mm} \|(\beta ^2 \mathbf{u} \hspace{-.25mm} - \hspace{-.25mm}
    \mathbf{v})\|_2^2}{1 - \beta^4} \nonumber\\
    &\hspace{-5mm} \overset{\eqref{eq:u},\hspace{.5mm} \eqref{eq:vvector}}{=} &\hspace{-3mm} \frac{(1 \hspace{-.25mm} - \hspace{-.25mm} \delta_{K - k + 1}) (1 + \beta ^2) \|
    \mathbf{x}_{T \backslash T^k} \|_2^2}{1 - \beta^4} \nonumber\\
    &  &\hspace{-3mm} - \frac{(1 + \delta_{K - k + 1}) \beta ^2  (1 + \beta ^2) \| \mathbf{x}_{T
    \backslash T^k} \|_2^2}{1 - \beta^4} \nonumber\\
    &\hspace{-5mm}  = &\hspace{-3mm}   \left( 1 - \frac{ 1 + \beta ^2 }{1 - \beta ^2}  \delta_{K - k + 1}  \right) \| \mathbf{x}_{T \backslash T^k} \|_2^2 \nonumber\\
    &\hspace{-5mm}  \overset{\eqref{e:alphaproperty2}}{=} &\hspace{-3mm}
    \left(\hspace{-.25mm} 1 \hspace{-.25mm} - \hspace{-.25mm} \delta_{K - k + 1} \sqrt{\alpha + 1}  \right) \hspace{-.5mm} \| \mathbf{x}_{T \backslash T^k} \|_2^2 \label{eq:usa}   \overset{\eqref{inq:deltaalpha}}{>}   0. \label{eq:lasteq}
\end{eqnarray}

Combining \eqref{eq:jijia} and \eqref{eq:lasteq} completes the proof.
\end{proof}

\section{Proof of Proposition~\ref{p:mainleft}}\label{ss:mainleft}

\begin{proof}
We first prove \eqref{e:mainleft}. By \eqref{eq:rk2}, one obtains
\begin{eqnarray*}
\max_{i \in T \backslash T^k} \hspace{-.5mm} \frac{|\langle \mathbf{A}_i, \mathbf{r}^{k} \rangle|}{\| \mathbf{P}^{\bot}_{T^{k}} \mathbf{A}_i \|_2}
&\hspace{-2mm}  = &\hspace{-2mm}  \max_{i \in T \backslash T^k} \frac{|\langle \mathbf{A}_i, \mathbf{P}^{\bot}_{T^{k}} \mathbf{A}_{T \backslash T^k} \mathbf{x}_{T \backslash T^k} \rangle|}{\| \mathbf{P}^{\bot}_{T^{k}} \mathbf{A}_i \|_2} \nonumber \\
&\hspace{-2mm}  \overset{\eqref{eq:unitnorm}}{\geq}&\hspace{-2mm}   \max_{i \in T \backslash T^k} \hspace{-.25mm} {|\langle \mathbf{A}_i, \mathbf{P}^{\bot}_{T^{k}} \mathbf{A}_{T \backslash T^k} \mathbf{x}_{T \backslash T^k} \rangle|}. ~~~~~~
\end{eqnarray*}
Thus, to show \eqref{e:mainleft}, it suffices to show
\begin{eqnarray}
\max_{i \in T \backslash T^k}  {|\langle \mathbf{A}_i, \mathbf{P}^{\bot}_{T^{k}} \mathbf{A}_{T \backslash T^k} \mathbf{x}_{T \backslash T^k} \rangle|}
 \geq \frac{\| \mathbf{r}^{k} \|_2^2}{\sqrt{|T \backslash T^k|} \| \mathbf{x}_{T \backslash T^k} \|_2}.~\label{e:lbd}
\end{eqnarray}

We prove \eqref{e:lbd} by following the approach in {\cite[Lemma~1]{wen2015sharp}}.
{\color{black}{It is not hard to check that}}
  \begin{eqnarray*}
  ~~~  \lefteqn{\hspace{-6mm}\sqrt{|T \backslash T^k|} ~ \| \mathbf{x}_{T \backslash T^k} \|_2
    \max_{i \in T \backslash T^k} {|\langle \mathbf{A}_i, \mathbf{P}^{\bot}_{T^{k}} \mathbf{A}_{T \backslash T^k} \mathbf{x}_{T \backslash T^k} \rangle|}}  \nonumber\\
    & \overset{(a)}{\geq} &\hspace{-2mm}  \| \mathbf{x}_{T \backslash T^k} \|_1 \hspace{-.3mm}  \max_{i \in T \backslash T^k} {|\langle \mathbf{A}_i, \mathbf{P}^{\bot}_{T^{k}} \mathbf{A}_{T \backslash T^k} \mathbf{x}_{T \backslash T^k} \rangle|} \\
    & = &\hspace{-3mm} \hspace{-.5mm} \Bigg(\hspace{-.25mm}  \sum_{\ell \in T \backslash T^k} | x_\ell | \hspace{-.25mm}  \Bigg) \hspace{-.75mm}  \max_{i \in T \backslash T^k} \hspace{-.5mm}  \big|
    \mathbf{A}'_i \mathbf{P}^{\bot}_{T^k} \mathbf{A}_{T \backslash T^k}
    \mathbf{x}_{T \backslash T^k} \big| \\
    & \overset{(b)}{\geq} &\hspace{-3mm}  \sum_{\ell \in T \backslash T^k} \big| x_\ell \mathbf{A}'_\ell
    \mathbf{P}^{\bot}_{T^k} \mathbf{A}_{T \backslash T^k} \mathbf{x}_{T \backslash T^k}
    \big| \\
    & \geq &\hspace{-3mm}  \sum_{\ell \in T \backslash T^k} \left( x_\ell \mathbf{A}'_\ell
    \mathbf{P}^{\bot}_{T^k} \mathbf{A}_{T \backslash T^k} \mathbf{x}_{T \backslash T^k}
    \right) \\
    & = &\hspace{-3mm}   \big(\mathbf{A}_{T \backslash T^k} \mathbf{x}_{T \backslash T^k}\big)'
    \mathbf{P}^{\bot}_{T^k} \mathbf{A}_{T \backslash T^k} \mathbf{x}_{T \backslash T^k}
    \\
    & \overset{\eqref{eq:pbot}}{=}
     &\hspace{-2mm}  \| \mathbf{P}^{\bot}_{T^k} \mathbf{A}_{T \backslash T^k} \mathbf{x}_{T
    \backslash T^k} \|_2^2
    \overset{\eqref{eq:rk2}}{=}  \|\mathbf{r}^{k} \|_2^2,
  \end{eqnarray*}
  where (a) follows from Cauchy-Schwarz inequality and (b) is because for each $\ell \in
  T \backslash T^k$,
\[
    \max_{i \in T \backslash T^k} | \mathbf{A}'_i \mathbf{P}^{\bot}_{T^k}
     \mathbf{A}_{T \backslash T^k} \mathbf{x}_{T \backslash T^k} |\geq  |\mathbf{A}'_\ell \mathbf{P}^{\bot}_{T^k} \mathbf{A}_{T \backslash T^k} \mathbf{x}_{T
     \backslash T^k}|.~
\]
Thus \eqref{e:lbd} holds, and this completes the proof of \eqref{e:mainleft}.


We next move to the proof of \eqref{e:mainright}, Let
  $
  j_0 := \underset{i \in \{1, \cdots, n\} \backslash T}{\arg \max} \frac{|\langle \mathbf{A}_i, \mathbf{r}^{k} \rangle|}{\| \mathbf{P}^{\bot}_{T^{k}} \mathbf{A}_i \|_2}.
  $
Proving (\ref{e:mainright}) is equivalent to showing 
$$
 \frac{\| \mathbf{r}^{k} \|_2^2}{\sqrt{|T \backslash T^k|} \| \mathbf{x}_{T \backslash T^k} \|_2} > \frac{|\langle \mathbf{A}_{j_0}, \mathbf{r}^{k} \rangle|}{\| \mathbf{P}^{\bot}_{T^{k}} \mathbf{A}_{j_0} \|_2}, 
$$
\begin{equation}
\hspace{-8mm} \text{i.e.,}~~ \| \mathbf{r}^{k} \|_2^2 > \frac{\sqrt{|T \backslash T^k|}  }{\| \mathbf{P}^{\bot}_{T^{k}} \mathbf{A}_{j_0} \|_2} \| \mathbf{x}_{T \backslash T^k} \|_2 |\langle \mathbf{A}_{j_0}, \mathbf{r}^{k} \rangle|. ~~\label{eq:japan}
\end{equation}

By applying Lemma~\ref{l:mainlemma} with
$\alpha=\frac{|T \backslash T^k|}{\| \mathbf{P}^{\bot}_{T^{k}} \mathbf{A}_{j_0} \|_2^2},$\footnote{In this way, the column-normalization feature of OLS is incorporated. We note here that $\alpha$ may not be an integer.}
we can see that \eqref{eq:japan} holds if
\begin{equation}
\delta_{K - k + 1}<1/\sqrt{\frac{|T \backslash T^k|}{\| \mathbf{P}^{\bot}_{T^{k}} \mathbf{A}_{j_0} \|_2^2}+1}. \label{eq:china}
\end{equation}

In fact, applying Lemma~\ref{l:monot} yields {\color{black}{$\delta_{K+1} \geq \delta_{K - k + 1}$}}, thus~\eqref{eq:china} is guaranteed by \eqref{eq:o} whenever
  \begin{equation}
    \label{inq:TSK}
    \frac{|T \backslash T^k|}{\| \mathbf{P}^{\bot}_{T^k} \mathbf{A}_{j_0} \|_2^2} \leq K.
  \end{equation}
Thus, to prove \eqref{eq:japan}, it suffices to show that \eqref{inq:TSK} holds for all $k \in \{0, \cdots, K - 1\}$. To complete the argument, we consider the following two cases.
First, if $k = 0$,  we immediately have
\[
  \frac{|T \backslash T^0|}{\| \mathbf{P}^{\bot}_{T^0} \mathbf{A}_{j_0} \|_2^2} = \frac{|T|}{\| \mathbf{A}_{j_0} \|_2^2} \overset{\eqref{eq:unitnorm}}{\leq} K.~~
  \]

Second, if $1 \leq k \leq K - 1$, then $|T \backslash T^k| \leq K - 1$. Hence,
{\color{black}{
\begin{eqnarray} \label{eq:maomi}
  \frac{|T \backslash T^k|}{\| \mathbf{P}^{\bot}_{T^k} \mathbf{A}_{j_0} \|_2^2} 
  &\hspace{-2mm} \overset{\text{Lemma}~\ref{l:projld}}{\leq}&\hspace{-2mm} \frac{K - 1}{1 - \delta^2_{|T^k| + 1}} ~=~ \frac{K - 1}{1 - \delta^2_{k + 1}} \nonumber \\
&\hspace{-2mm} \overset{\text{Lemma}~\ref{l:monot}}{\leq}&\hspace{-2mm} \frac{K - 1}{1 - \delta^2_{K + 1}}
~~~\hspace{-.3mm}\overset{(\ref{eq:o})}{\leq}~ \frac{K - 1}{1 - \frac{1}{K + 1}} \leq K.~~~~~~~
\end{eqnarray}
}}
Therefore, \eqref{inq:TSK} holds and the proof is complete.
\end{proof}

{\color{black}{

\section{Main Steps for Constructing Example~\ref{ex:countex}}\label{app:example}

\newcounter{mytempeqcnt21}
\setcounter{mytempeqcnt21}{\value{equation}}
\setcounter{equation}{4}
\begin{figure*}[h!]
\normalsize 
\begin{eqnarray}\label{eq:adag}
\lambda_1  &\hspace{-2.5mm}=&\hspace{-2.5mm} 1/3\,\left( \left( 3\,\left( -24\,{a}^{6}+ \left( 72\,b+72 \right) {
a}^{5}+ \left( -63\,{b}^{2}-216\,b-108 \right) {a}^{4}+ \left( 6\,{b}^
{3}+198\,{b}^{2}+288\,b+96 \right) {a}^{3}+ \left( -63\,{b}^{4}-198\,{
b}^{3} \right. \right. \right. \right. \nonumber \\
&&\hspace{-4mm} \left. \left. \left. \left. -315\,{b}^{2}-216\,b-54 \right) {a}^{2}+72\, \left( b+1 \right) 
 \left( {b}^{2}+b+1/2 \right) ^{2}a-24\, \left( {b}^{2}+b+1/2 \right) 
^{3} \right)^{1/2}
-27\,{a}^{2}b+ \left( 27\,{b}^{2}+27\,b \right) a \right) ^{2/3}
\right. \nonumber \\
&&\hspace{-3mm} \left. +
3\,\left( 3\,\left( -24\,{a}^{6}+ \left( 72\,b+72 \right) {a}^{5}+
 \left( -63\,{b}^{2}-216\,b-108 \right) {a}^{4}+ \left( 6\,{b}^{3}+198
\,{b}^{2}+288\,b+96 \right) {a}^{3}+ \left( -63\,{b}^{4}-198\,{b}^{3} \right.  \right. \right. \right. \nonumber \\
&&\hspace{-4mm} \left. \left. \left. \left. -315\,{b}^{2}-216\,b-54 \right) {a}^{2}+72\, \left( b+1 \right) 
 \left( {b}^{2}+b+1/2 \right) ^{2}a-24\, \left( {b}^{2}+b+1/2 \right) 
^{3} \right)^{1/2}
-27\,{a}^{2}b+ \left( 27\,{b}^{2}+27\,b \right) a \right)^{1/2}  \right. \nonumber \\
&&\hspace{-3mm} \left.
+6\,{a}^{2}+
 \left( -6\,b-6 \right) a+6\,{b}^{2}+6\,b+3 \right)   \left( 3\,\left( -24\,
{a}^{6}+ \left( 72\,b+72 \right) {a}^{5}+ \left( -63\,{b}^{2}-216\,b-
108 \right) {a}^{4}+ \left( 6\,{b}^{3}+198\,{b}^{2}
\right. \right. \right. \nonumber \\
&&\hspace{-3.75mm} \left. \left. \left. +288\,b+96 \right) 
{a}^{3}+ \left( -63\,{b}^{4}-198\,{b}^{3}-315\,{b}^{2}-216\,b-54
 \right) {a}^{2}+72\, \left( b+1 \right)  \left( {b}^{2}+b+1/2
 \right) ^{2}a-24\, \left( {b}^{2}+b+1/2 \right) ^{3} \right)^{1/2} \right.  \nonumber \hspace{-4mm} \\
&&\hspace{-3mm} \left. 
 -27\,{a}^{2}b+
 \left( 27\,{b}^{2}+27\,b \right) a \right)^{-1/3},
\end{eqnarray}

\vspace{-6mm}
\begin{eqnarray}
\lambda_2 &\hspace{-2.5mm}=&\hspace{-2.5mm}  - \left( \left( -i/ 6\sqrt {3}+1/6 \right) 
 \left( 3\,\left( -24\,{a}^
{6}+ \left( 72\,b+72 \right) {a}^{5}+ \left( -63\,{b}^{2}-216\,b-108
 \right) {a}^{4}+ \left( 6\,{b}^{3}+198\,{b}^{2}+288\,b+96 \right) {a}
^{3} 
\right. \right. \right. \nonumber \hspace{-4mm} \\
&&\hspace{-3.5mm} \left. \left. \left. 
+ \left( -63\,{b}^{4}-198\,{b}^{3}-315\,{b}^{2}-216\,b-54 \right) 
{a}^{2}+72\, \left( b+1 \right)  \left( {b}^{2}+b+1/2 \right) ^{2}a-24
\, \left( {b}^{2}+b+1/2 \right) ^{3} \right)^{1/2}
-27\,{a}^{2}b 
\right. \right.  \nonumber \hspace{-4mm} \\
&&\hspace{-3mm} \left. \left.  
+ \left( 27\,{b}^{2}
+27\,b \right) a \right) ^{2/3}- \left( 3\,\left( -24\,{a}^{6}+
 \left( 72\,b+72 \right) {a}^{5}+ \left( -63\,{b}^{2}-216\,b-108
 \right) {a}^{4}+ \left( 6\,{b}^{3}+198\,{b}^{2}+288\,b+96 \right) {a}
^{3}
\right. \right. \right. \nonumber \hspace{-4mm} \\
&&\hspace{-3.5mm} \left. \left. \left. 
+ \left( -63\,{b}^{4}-198\,{b}^{3}-315\,{b}^{2}-216\,b-54 \right) 
{a}^{2}+72\, \left( b+1 \right)  \left( {b}^{2}+b+1/2 \right) ^{2}a-24
\, \left( {b}^{2}+b+1/2 \right) ^{3} \right)^{1/2}
-27\,{a}^{2}b 
\right. \right. \nonumber \hspace{-4mm} \\
&&\hspace{-3mm} \left. \left. 
+ \left( 27\,{b}^{2}
+27\,b \right) a \right)^{1/3}
+ \left( i{a}^{2}+ \left( -ib-i \right) a+i{b}^{2}+ib
+i/2 \right) \sqrt {3}+{a}^{2}+ \left( -b-1 \right) a+{b}^{2}+b+1/2 \right)
\left( 3\,\left( -24\,{a}^{6} 
\right. \right.  \nonumber \hspace{-4mm} \\
&&\hspace{-3mm} \left. \left.  
+ \left( 72\,b+72 \right) {a}^{5}+
 \left( -63\,{b}^{2}-216\,b-108 \right) {a}^{4}+ \left( 6\,{b}^{3}+198
\,{b}^{2}+288\,b+96 \right) {a}^{3}+ \left( -63\,{b}^{4}-198\,{b}^{3}-
315\,{b}^{2}-216\,b 
\right. \right. \right. \nonumber \hspace{-4mm} \\
&&\hspace{-3mm} \left. \left. \left. 
-54 \right) {a}^{2}+72\, \left( b+1 \right) 
 \left( {b}^{2}+b+1/2 \right) ^{2}a-24\, \left( {b}^{2}+b+1/2 \right) 
^{3} \right)^{1/2}
-27\,{a}^{2}b+ \left( 27\,{b}^{2}+27\,b \right) a \right)^{-1/3}
\end{eqnarray}

\vspace{-6mm}
\begin{eqnarray}\label{eq:adag3}
\lambda_3 &\hspace{-2.5mm}=&\hspace{-2.5mm}  \left( \left( -i/ 6\sqrt {3} - 1/6 \right) 
 \left( 3\,\left( -24\,{a}^
{6}+ \left( 72\,b+72 \right) {a}^{5}+ \left( -63\,{b}^{2}-216\,b-108
 \right) {a}^{4}+ \left( 6\,{b}^{3}+198\,{b}^{2}+288\,b+96 \right) {a}
^{3} 
\right. \right. \right. \nonumber \hspace{-4mm} \\
&&\hspace{-3.5mm} \left. \left. \left. 
+ \left( -63\,{b}^{4}-198\,{b}^{3}-315\,{b}^{2}-216\,b-54 \right) 
{a}^{2}+72\, \left( b+1 \right)  \left( {b}^{2}+b+1/2 \right) ^{2}a-24
\, \left( {b}^{2}+b+1/2 \right) ^{3} \right)^{1/2}
-27\,{a}^{2}b 
\right. \right.  \nonumber \hspace{-4mm} \\
&&\hspace{-3mm} \left. \left.  
+ \left( 27\,{b}^{2}
+27\,b \right) a \right) ^{2/3} + \left( 3\,\left( -24\,{a}^{6}+
 \left( 72\,b+72 \right) {a}^{5}+ \left( -63\,{b}^{2}-216\,b-108
 \right) {a}^{4}+ \left( 6\,{b}^{3}+198\,{b}^{2}+288\,b+96 \right) {a}
^{3}
\right. \right. \right. \nonumber \hspace{-4mm} \\
&&\hspace{-3.5mm} \left. \left. \left. 
+ \left( -63\,{b}^{4}-198\,{b}^{3}-315\,{b}^{2}-216\,b-54 \right) 
{a}^{2}+72\, \left( b+1 \right)  \left( {b}^{2}+b+1/2 \right) ^{2}a-24
\, \left( {b}^{2}+b+1/2 \right) ^{3} \right)^{1/2}
-27\,{a}^{2}b 
\right. \right. \nonumber \hspace{-4mm} \\
&&\hspace{-3mm} \left. \left. 
+ \left( 27\,{b}^{2}
+27\,b \right) a \right)^{1/3}
+ \left( i{a}^{2}+ \left( -ib-i \right) a+i{b}^{2}+ib
+i/2 \right) \sqrt {3}+{a}^{2}+ \left( -b-1 \right) a+{b}^{2}+b+1/2 \right)
\left( 3\,\left( -24\,{a}^{6} 
\right. \right.  \nonumber \hspace{-4mm} \\
&&\hspace{-3mm} \left. \left.  
+ \left( 72\,b+72 \right) {a}^{5}+
 \left( -63\,{b}^{2}-216\,b-108 \right) {a}^{4}+ \left( 6\,{b}^{3}+198
\,{b}^{2}+288\,b+96 \right) {a}^{3}+ \left( -63\,{b}^{4}-198\,{b}^{3}-
315\,{b}^{2}-216\,b 
\right. \right. \right. \nonumber \hspace{-4mm} \\
&&\hspace{-3mm} \left. \left. \left. 
-54 \right) {a}^{2}+72\, \left( b+1 \right) 
 \left( {b}^{2}+b+1/2 \right) ^{2}a-24\, \left( {b}^{2}+b+1/2 \right) 
^{3} \right)^{1/2}
-27\,{a}^{2}b+ \left( 27\,{b}^{2}+27\,b \right) a \right)^{-1/3}
\end{eqnarray}
\vspace{-4mm}

\hrulefill
\end{figure*}
\newcounter{mytempeqcnt22}
\setcounter{mytempeqcnt22}{\value{equation}}
\setcounter{equation}{0}

We aim to construct a $K$-sparse signal $\mathbf{x}$ and a matrix $\mathbf{A}$ satisfying the RIP with {\color{black}{$\delta_{K + 1} = \frac{1}{\sqrt{K + \beta}}$}} where $\beta \in (0,1]$ is a constant, for which OLS fails to recover $\mathbf{x}$ in $K$ steps. Here, we wish $\beta$ to be as close to one as possible so as to minimize the gap to the sufficient condition $\delta_{K + 1} < \frac{1}{\sqrt{K + 1}}$. The main steps for maximizing $\beta$ are as follows:
    \begin{enumerate}[i)]

  \item For simplicity, in the following we construct a $2$-sparse signal $\mathbf{x} \in \mathbb{R}^3$ and a $3$-by-$3$ symmetric matrix $\mathbf{A}'\mathbf{A}$: 
    \begin{equation} \label{eq:hi}
    \mathbf{x} = \left[ \begin{array}{c}
       \hspace{-.5mm} 0 \hspace{-.5mm}\\
       \hspace{-.5mm} - 1 \hspace{-.5mm}\\
       \hspace{-.5mm} 1 \hspace{-.5mm}
    \end{array} \right]~\text{and}~\mathbf{A}'\mathbf{A} =
    \left[ \begin{array}{ccc}
      1            &      a        & b \\
      a            &      1        & c \\
      b            &      c        & 1
    \end{array} \right], 
    \end{equation}
    where $a,b,c \in (-1,1)$ are parameters to be determined.\footnote{{\color{black}{We set the diagonal values of $\mathbf{A}'\mathbf{A}$ to ones to ensures that $\mathbf{A}$ has unit $\ell_2$-norm columns. Then, we must have $a,b,c \neq \pm 1$ since otherwise the columns of $\mathbf{A}$ would be linearly dependent and also the RIP would not be satisfied (e.g., $b = \langle \mathbf{A}_1, \mathbf{A}_3 \rangle  = 1$ implies $\mathbf{A}_1 =  \mathbf{A}_3$).}}} Then, we try to find out the optimal set of $\{a, b, c\}$ that maximizes $\beta$ in the isometry constant \begin{equation} \label{eq:deng}
    \delta_3 = \left.\frac{1}{\sqrt{K + \beta}} \right|_{K = 2}= \frac{1}{\sqrt{2 + \beta}}.
    \end{equation}

\item To simplify the problem, we explore the relationship between $a$, $b$ and $c$. 
Observe that 
\begin{eqnarray}
\left| \langle \mathbf{A}_1, \mathbf{y} \rangle \right| &\hspace{-2mm}=&\hspace{-2mm}  |\langle (\mathbf{A}'\mathbf{A})_1, \mathbf{x} \rangle| = |a - b|, \nonumber\\
\left| \langle \mathbf{A}_2, \mathbf{y} \rangle \right| &\hspace{-2mm}=&\hspace{-2mm}  |\langle (\mathbf{A}'\mathbf{A})_2, \mathbf{x} \rangle| = 1 - c,  \nonumber \\
\left| \langle \mathbf{A}_3, \mathbf{y} \rangle \right| &\hspace{-2mm}=&\hspace{-2mm} |\langle (\mathbf{A}'\mathbf{A})_3, \mathbf{x} \rangle|  = 1 - c. \nonumber
\end{eqnarray}
We consider the case where $\left| \langle \mathbf{A}_1, \mathbf{y} \rangle \right| = \left| \langle \mathbf{A}_2, \mathbf{y} \rangle \right| =\left| \langle \mathbf{A}_3, \mathbf{y} \rangle \right|$, that is, three columns of $\mathbf{A}$ have the same correlation with $\mathbf{y}$. By the tie-breaking rule in Table~\ref{a:OLS}, this case is the critical case for OLS to select a wrong index $t^1 = 1$ in the first iteration. By investigating this case, we expect to obtain the best possible value for $\beta$.
Specifically, we have 
$
|a - b| = 1 - c, 
$
and thus
\begin{equation}
c =
\begin{cases}
1 - a + b                     & a \geq b, \\ 
1 + a - b                     & a < b.
\end{cases}  
\end{equation}
Without loss of generality, we only consider the case where $a \geq b$ since permuting the matrix yields the other case. Applying $c = 1 - a + b$ in~\eqref{eq:hi}, we have 
\begin{equation}
\mathbf{A}'\mathbf{A} =
    \left[ \begin{array}{ccc}
      1            &      a                & b \\
      a            &      1                & 1 - a + b \\
      b            &      1 - a + b        & 1
    \end{array} \right] \label{eq:AA}
\end{equation}

\newcounter{mytempeqcnt31}
\setcounter{mytempeqcnt31}{\value{equation}}
\setcounter{equation}{4}

\newcounter{mytempeqcnt32}
\setcounter{mytempeqcnt32}{\value{equation}}
\setcounter{equation}{7}

The eigenvalues of $\mathbf{A}'\mathbf{A}$ are given in \eqref{eq:adag}--\eqref{eq:adag3}, which are obtained with the aid of Maple 18. One can check that $\lambda_1$, $\lambda_2$, and $\lambda_3$ are continuous when $-1 < b \leq a < 1$ and moreover, 
\begin{eqnarray}
\hspace{-1mm}\begin{cases}
\sup_{-1 < b \leq a < 1} \lambda_1 \rightarrow  3,       & \text{when}~a\rightarrow 1,~b \rightarrow -1, \\ 
\inf_{-1 < b \leq a < 1} \lambda_1 = \frac{4}{3},        & \text{when}~a=\frac{1}{3},~b =-\frac{1}{3},  \\
\sup_{-1 < b \leq a < 1} \lambda_2 = \frac{1}{3},        & \text{when}~a=\frac{1}{3},~b =-\frac{1}{3},  \\
\inf_{-1 < b \leq a < 1} \lambda_2 \rightarrow 0,        & \text{when}~a\rightarrow 1,~b \rightarrow -1,  \\
\sup_{-1 < b \leq a < 1} \lambda_3 = \frac{4}{3},        & \text{when}~a=\frac{1}{3},~b =-\frac{1}{3}, \\
\inf_{-1 < b \leq a < 1} \lambda_3 \rightarrow 0,        & \text{when}~a\rightarrow 1,~b \rightarrow -1. \label{eq:duorou}
\end{cases}   
\end{eqnarray}

%

\item Note that our goal is to maximize $\beta$, which is equivalent to minimizing $\delta_3$. Thus, by exploring the relationship between $\delta_3$ and the eigenvalues of $\mathbf{A}'\mathbf{A}$~\cite[Remark~1]{dai2009subspace}, one can show from~\eqref{eq:duorou} that 
\begin{eqnarray}
 \lefteqn{ \min_{1 < b \leq a < 1} \delta_3} \nonumber \\
&\hspace{-2mm}=&\hspace{-2mm} \min_{1 < b \leq a < 1}   \max  \left\{  \lambda_{\max}(\mathbf{A}'\mathbf{A})   -  1,  1  -   \lambda_{\min}(\mathbf{A}'\mathbf{A}) \right\}  \nonumber   \\
&\hspace{-2mm}=&\hspace{-2mm}   \max \left\{ \inf_{1 < b \leq a < 1}   \lambda_1   -   1,  1  -  \sup_{-1 < b \leq a < 1}   \lambda_2, \right. \nonumber \\
&\hspace{-2mm} &\hspace{-2mm}  \left. \min_{1 < b \leq a < 1}|\lambda_3 - 1|  \right\}  = \max \left\{ \frac{1}{3}, \frac{2}{3}, 0 \right\} = \frac{2}{3}, \hspace{-2mm} \label{eq:dajiaii}
\end{eqnarray}
and hence
\begin{equation}
 \max_{1 < b \leq a < 1} \beta \overset{\eqref{eq:deng}}{=} \left( \min_{1 < b \leq a < 1}  \delta_3 \right)^{-2} -2 \overset{\eqref{eq:dajiaii}}{=} \frac{1}{4}.
 \end{equation} 
 
 Finally, note that the maxima is attained at $(a, b)  = \left( \frac{1}{3}, - \frac{1}{3}\right)$. Applying this to \eqref{eq:AA} and performing a Cholesky decomposition on $\mathbf{A}'\mathbf{A}$ yield $\mathbf{A}$. 

    \end{enumerate}
}}

{\color{black}{
\section{Proof of Theorem~\ref{t:transferT}}\label{ss:relat}

\begin{proof}
Let $\mathbf{x}\in \mathbb{R}^{n}$ be an arbitrary $K$-sparse vector,
then $\mathbf{D x}$ is also $K$-sparse, which together with \eqref{eq:RIP} implies that
\begin{eqnarray}
\label{e:transferTub}
\| \hat{\mathbf{A}}\mathbf{D x} \|_2^2 
&\hspace{-2mm}  \leq &\hspace{-2mm} \big(1 + \delta_K(\hat{\mathbf{A}}) \big) \| \mathbf{D x} \|_2^2 \nonumber\\
&\hspace{-2mm} \leq &\hspace{-2mm} \big(1 + \delta_K (\hat{\mathbf{A}}) \big) \max_{1\leq i\leq n} d^2_{ii} \| \mathbf{x} \|_2^2 \nonumber\\
&\hspace{-2mm} = &\hspace{-2mm} \left[1+\left(\big(1 + \delta_K (\hat{\mathbf{A}}) \big) \max_{1\leq i \leq}d^2_{ii}-1\right)\right]  \| \mathbf{x} \|_2^2. ~~~~~~~
\end{eqnarray}
Similarly,
\begin{eqnarray}
\label{e:transferTlb}
\| \hat{\mathbf{A}}\mathbf{D x} \|_2^2 
&\hspace{-2mm} \geq &\hspace{-2mm} \big(1 - \delta_K(\hat{\mathbf{A}}) \big) \| \mathbf{D x} \|_2^2 \nonumber\\
&\hspace{-2mm} \geq &\hspace{-2mm} \big(1 - \delta_K (\hat{\mathbf{A}}) \big) \min_{1\leq i \leq n} d^2_{ii} \| \mathbf{x} \|_2^2 \nonumber\\
&\hspace{-2mm} = &\hspace{-2mm} \left[1-\left(1-   \big(1 - \delta_K (\hat{\mathbf{A}})  \big)   \min_{1\leq i \leq n}d_{ii}^2\right)\right]    \| \mathbf{x} \|_2^2. ~~~~~~~
\end{eqnarray}

Since $\mathbf{A} = \hat{\mathbf{A}}\mathbf{D}$, by \eqref{e:transferTub}, \eqref{e:transferTlb}, \eqref{e:gamma} and the definition of the RIP,
one can see that $\delta_K(\mathbf{A}) \leq \gamma$ holds.
\end{proof}
}}

{\color{black}{
\section{Proof of Corollary~\ref{c:transferT}}\label{app:cc}

\begin{proof}
Since $\hat{\mathbf{A}}$ satisfies the RIP of order $K$ with isometry constant $\delta_K(\hat{\mathbf{A}})$,  by Lemma \ref{l:monot}, for $i = 1, \cdots, n$,
we have
\[
1 - \delta_K (\hat{\mathbf{A}}) \leq 1 - \delta_1 (\hat{\mathbf{A}})
\leq \|\hat{\mathbf{A}}_i\|^2_2 \leq 1 + \delta_1 (\hat{\mathbf{A}})
\leq 1 + \delta_K (\hat{\mathbf{A}}).
\]
Noting that $d_{ii}=\frac{1}{\|\hat{\mathbf{A}}_i\|_2},$ we obtain
\[
\frac{1}{1 + \delta_K (\hat{\mathbf{A}})} \leq d_{ii}^2 \leq \frac{1}{1 - \delta_K (\hat{\mathbf{A}})}, ~~ i = 1, \cdots, n.
 \]
 Thus,
 \[
 (1+\delta_K(\hat{\mathbf{A}}))\max_{1\leq i\leq n}d_{ii}-1
 \leq \frac{1+\delta_K(\hat{\mathbf{A}})}{1-\delta_K(\hat{\mathbf{A}})}-1
 =\frac{2\delta_K(\hat{\mathbf{A}})}{1-\delta_K(\hat{\mathbf{A}})},
 \]
 and
 \[
 1-(1-\delta_K(\hat{\mathbf{A}}))\min_{1\leq i\leq n}d_{ii}\leq
 1-\frac{1-\delta_K(\hat{\mathbf{A}})}{1+\delta_K(\hat{\mathbf{A}})}
 =\frac{2\delta_K(\hat{\mathbf{A}})}{1+\delta_K(\hat{\mathbf{A}})}.
 \]
 Then, by $\delta_K(\mathbf{A}) \leq \gamma$ and \eqref{e:gamma}, one can easily see that \eqref{e:transferT2} holds, which completes the proof.
\end{proof}

}}

\end{document}